\newcommand{\E}{\mathbf{E}}
\renewcommand{\Pr}{\mathbf{Pr}}
\newcommand{\qed}{\hspace{1em}\hfill\rule{.5em}{.5em}}
\newtheorem{theorem}{Theorem}
\newtheorem{observation}{Observation}
\newtheorem{lemma}[theorem]{Lemma}
\newtheorem{remark}{Remark}
\newenvironment{proof}{\paragraph{Proof:}\
  }{\mbox{}\hfill\qed\medskip}
\begin{document}

\title{Succinct Data Structures for Retrieval\\ and Approximate Membership%
\thanks{The main ideas for this paper were conceived while the authors were
participating in the 2006 Seminar on Data Structures at IBFI Schloss Dagstuhl, Germany.}}

\author{Martin Dietzfelbinger\protect\thanks{Faculty of Computer Science and Automation, Technische Universit{\"a}t Ilmenau, P.O.Box 100565, 98684 Ilmenau, Germany, email:
{\tt martin.dietzfelbinger@tu-ilmenau.de}} \and Rasmus Pagh\thanks{Computational Logic and Algorithms Group, IT University of Copenhagen, Rued Langgaards Vej 7, 2300 K{\o}benhavn S,
Denmark, email: {\tt pagh@itu.dk}}}

\maketitle

\begin{abstract}
The {\em retrieval problem} is the problem of associating data with keys in a set. Formally, the data structure must store a function $f\colon U\to \{0,1\}^r$ that has specified values on the elements of a given set $S\subseteq U$, $|S|=n$, but may have any value on elements outside $S$. Minimal perfect hashing makes it possible to avoid storing the set $S$, but this induces a space overhead of $\Theta(n)$ bits in addition to the $nr$ bits needed for function values. In this paper we show how
to eliminate this overhead. Moreover, we show that for any $k$ query time $O(k)$ can be achieved using space that is within a factor $1+e^{-k}$ of optimal, asymptotically for large $n$. If we allow logarithmic evaluation time, the additive overhead can be reduced to $O(\log \log n)$ bits whp. The time to construct the data structure is $O(n)$, expected. 
A main technical ingredient is to utilize existing tight bounds on the probability of almost square random matrices with rows of low weight to have full row rank.
In addition to direct constructions, we point out a close connection between retrieval structures and hash tables where keys are stored in an array and some kind of probing scheme is used.
Further, we propose a general reduction that transfers the results on retrieval into analogous results on \emph{approximate membership}, a problem traditionally addressed using Bloom filters. Again, we show how to eliminate the space overhead present in previously known methods, and get arbitrarily close to the lower bound. The evaluation procedures of our data structures are extremely simple (similar to a Bloom filter). For the results stated above we assume free access to fully random hash functions. However, we show how to justify this assumption using extra space 
$o(n)$ to simulate full randomness on a RAM.
\end{abstract}


\newpage


\section{Introduction}

Suppose we want to build a data structure that is able to distinguish between girls' and boys' names, in a collection of $n$ names. Given a string not in the set of names, the data structure may return any answer. It is clear that in the worst case this data structure needs at least $n$ bits, even if it is given access to the list of names. The previously best solution (implicit in~\cite{p:MWHC96}) that does not require the set of names to be stored requires around $1.23n$ bits. Surprisingly, as we will see in this paper, $n+o(n)$ bits is enough, still allowing fast queries. If ``global'' hash functions, shared among all data structures, are available the space usage drops all the way to $n+O(\log \log n)$ bits whp. This is a rare example of a data structure with non-trivial functionality and a space usage that essentially matches the entropy lower bound.

\subsection{Problem definition}

The {\em dictionary problem\/} consists of storing a set $S$ of $n$ keys, and $r$ bits of data associated with each key. A {\em lookup\/} query for $x$ reports whether or not $x\in S$, and in the positive case reports the data associated with $x$. We will denote the size of $S$ by $n$, and assume that keys come from a set $U$ of size $n^{O(1)}$. In this paper, we restrict ourselves to the {\em static\/} problem, where $S$ and the associated data are fixed and do not change. We study two relaxations of the static dictionary problem that allow data structures using less space than a full-fledged dictionary:
\begin{itemize}\setlength{\itemsep}{0pt}
\item The {\em retrieval problem\/} differs from the dictionary problem in that the set $S$ does not need to be stored. A retrieval query on $x\in S$ is required to report the data associated with $x$, while a retrieval query on $x\not\in S$ may return any $r$-bit string.
\item The {\em approximate membership problem\/}  consists of storing a data structure that supports membership queries in the following manner: For a query on $x\in S$ it is reported that $x\in S$. For a query on $x\not\in S$ it is reported with probability at least $1-\varepsilon$ that $x\not\in S$, and with probability at most $\varepsilon$  that $x\in S$ (a ``false positive''). For simplicity we will assume that $\varepsilon$ is a negative power of~2.
\end{itemize}
The model of computation is a unit cost RAM with a standard instruction set. For simplicity we assume that a key fits in a single machine word, and that associated values are no larger than keys. Some results will assume free access to fully random hash functions, such that any function value can be computed in constant time. (This is explicitly stated in such cases.)

\subsection{Motivation}

The approximate membership problem has attracted significant interest in recent years due to a number of applications, mainly in distributed systems and database systems, where false positives can be tolerated and space usage is crucial (see~\cite{BM:02} for a survey). Often the false positive probability that can be tolerated is relatively large, say, in the range $1\%-10\%$, which entails that the space usage can be made much smaller than what would be required to store $S$ exactly.

The retrieval problem shows up in situations where the amount of data associated with each key is small, and
it is either known that queries will only be asked on keys in $S$, or where the answers returned for keys not in $S$ do not matter. As an example, suppose that we have ranked the URLs of the World Wide Web on a $2^r$ step scale, where $r$ is a small integer. Then a retrieval data structure would be able to provide the ranking of a given URL, without having to store the URL itself. The retrieval problem is also the key to obtaining a space-optimal RAM data structure that is able to answer range queries in constant time~\cite{range-1D,range1d2}.

\subsection{Previous results}

\paragraph{Approximate membership.}
The study of approximate membership was initiated by Bloom~\cite{bloom} who described the {\em Bloom filter\/} data structure which provides an elegant, near-optimal solution to the problem: 
The data structure is a bit array. Use $k=\log_2(1/\varepsilon)$ hash functions to associate each key with $k$ randomly located bits in the array. Set these bits to 1 for all $x\in S$, and put 0s elsewhere. On a query for $x$, the Bloom filter reports that $x\in S$ if and only if all bits associated with $x$ are 1. Bloom showed\footnote{Bloom used a certain simplifying assumption, independence of certain slightly correlated events, that has since been justified, see~\cite{BM:02}.} that a space usage of $n\log_2(1/\varepsilon)\log_2 e$ bits suffices for a false positive probability of~$\varepsilon$. Carter \emph{et al}.~\cite{MR80h:68037} showed that $n\log_2(1/\varepsilon)$ bits are required for solving the approximate membership problem when $|U|\gg n$ (see Appendix~\ref{app:approx-lower} for details).
Thus, the analysis of~\cite{bloom} shows that Bloom filters have space usage within a factor $\log_2 e\approx 1.44$ of the lower bound, which is tight.

Another approach to approximate membership is {\em perfect hashing}. A minimal perfect hash function for $S$ maps the keys of $S$ bijectively to $[n]=\{0,\ldots,n-1\}$. Hagerup and Tholey~\cite{HT01} showed how to store a minimal perfect hash function $h$ in a data structure of $n\log_2 e + o(n)$ bits such that it can be evaluated on a given input in constant time. This space usage is the best possible, up to the lower order term. Now store an array of $n$ entries where, for each $x\in S$, entry $h(x)$ contains a $\log_2(1/\varepsilon)$-bit hash signature $q(x)$. When looking up a key $x$, we answer $x\in S$ if and only if the hash signature at entry $h(x)$ is equal to $q(x)$. The origin of this idea is unknown to us, but it is described e.g.~in~\cite{BM:02}. The space usage for the resulting data structure differs from the lower bound $n\log_2(1/\varepsilon)$ by the space required for the minimum perfect hash function, and improves upon Bloom filters when $\varepsilon\leq 2^{-4}$ and $n$ is sufficiently large.

Mitzenmacher~\cite{Mitz:02} considered the {\em encoding\/} problem where the task is to represent and transmit an approximate set representation (no fast queries required). However, even in this case existing techniques have a space overhead similar to that of the perfect hashing approach. 

\paragraph{Retrieval.}
The retrieval problem has traditionally been addressed through the use of perfect hashing. Using the Hagerup-Tholey data structure yields a space usage of $nr + n\log_2 e + o(n)$ bits with constant query time.

Recently, Chazelle \emph{et al}.~\cite{CKRT04} presented a different approach to the problem based on an idea similar to that of a Bloom filter: Each key is associated with $k=O(1)$ locations in an array with $O(n)$ entries of $r$ bits. The answer to a retrieval query on $x$ is found by combining the values of entries associated with $x$, using bit-wise XOR. In place of the XOR operation, any abelian group operation may be used. In fact, this idea was used earlier by Majewski, Wormald, Havas, and Czech~\cite{p:MWHC96} and by Seiden and Hirschberg~\cite{HS94}  to address the special case of order-preserving minimal perfect hashing. It is not hard to see that these data structure in fact solve the retrieval problem. The main result of~\cite{p:MWHC96} is that for $k=3$ a space usage of around $1.23nr$ bits is possible, and this is the best possible using the construction algorithm of~\cite{CKRT04,p:MWHC96} (other values of $k$ give worse results).

The approach of these papers does not give a data structure that is more efficient than perfect hashing, asymptotically for large~$n$, but  the simplicity  and the lack of lower order terms in the space usage that may dominate for small $n$  makes it interesting from a practical viewpoint.  A particular feature is that (like for Bloom filters) all memory lookups are nonadaptive, i.e., the memory addresses can be determined from the query only. This can be exploited by modern CPU architectures that are able to parallelize memory lookups (see e.g.~\cite{ZuHeBo:DAMON:06}). 

In fact, Chazelle \emph{et al}.~also show how approximate membership can be incorporated into their data structure by extending array entries to $r+\log_2(1/\varepsilon)$ bits. This generalized data structure is called a {\em Bloomier filter}. Again, the space usage is a constant factor higher, asymptotically, than the solution based on perfect hashing.

\subsection{New contributions}

Our first contribution shows that the approach of~\cite{CKRT04,p:MWHC96,HS94} can be used to achieve space for retrieval that is very close to the lower bound:

\begin{theorem}\label{thm:main}
For any $\gamma>0$, $r=O(\log n)$, and any sufficiently large $n$ there exist data structures for the retrieval problem having the following space and time complexity on a unit cost RAM with free access 
to a fully random hash function\emph{:}
\begin{description}\setlength{\itemsep}{0pt}
\item[{\rm (a)}] Space $nr+O(\log\log n)$ bits whp.%
\footnote{``whp.'' means with probability $1-O(\frac{1}{\mathrm{poly}(n)})$.}%
, query time $O(\log n)$, expected construction time $O(n^3)$.
\item[{\rm (b)}] Space $(1+\gamma) nr$ bits, query time $O(1+\log(\frac1\gamma))$, 
expected construction time $O(n)$%
.
\end{description}
\end{theorem}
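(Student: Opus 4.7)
Both constructions fit one framework: allocate an array $z\in(\{0,1\}^r)^m$, associate every $x\in U$ with positions $h_1(x),\dots,h_k(x)$ via $k$ fully random hash functions, and answer a query on $x$ with $\bigoplus_{j=1}^k z[h_j(x)]$. Requiring this to equal $f(x)$ for every $x\in S$ is a sparse linear system $Az=b$ over $\mathbb{F}_2^r$, where the row of $A$ indexed by $x$ is the $\{0,1\}^m$-indicator of $\{h_1(x),\dots,h_k(x)\}$ and $b_x=f(x)$. A solution exists for every right-hand side iff $A$ has full row rank $n$, so the task splits into (i) a rank guarantee for a random sparse $A$ and (ii) an efficient solver.

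For part~(a) I would take $m=n$ (exactly square) and $k=\Theta(\log n)$. The tight full-row-rank bound for sparse random $\mathbb{F}_2$ matrices cited in the abstract says that, for this weight, the probability that $A$ has rank $n$ is bounded below by a constant. Resampling the seed of the hash family succeeds after $O(1)$ trials in expectation and within $O(\log n)$ trials whp, so recording the index of the successful seed in the header of the structure costs only $O(\log\log n)$ bits above the $nr$ bits of array. Solving $Az=b$ by Gaussian elimination, treating the $r$-bit right-hand sides coordinate-wise, costs the stated $O(n^3)$ expected time, and each query uses $k=O(\log n)$ XORs of $r$-bit words.

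For part~(b) I would take $k=\Theta(\log(1/\gamma))$ and $m=(1+\gamma)n$, aiming to push the density $n/m$ below the peeling threshold of the random $k$-uniform hypergraph whose edges are the sets $\{h_1(x),\dots,h_k(x)\}$. The known threshold behaves like $1+e^{-\Theta(k)}$, so the chosen $k$ is enough. When peeling succeeds---iteratively delete a degree-$1$ vertex together with its unique incident edge---the edges get paired with distinct private vertices, which places $A$ in triangular form after reordering; we then back-substitute to set the corresponding entries of $z$ and fill the unmatched entries arbitrarily. Both peeling and back-substitution run in $O(n)$ expected time, and each query is again $O(k)=O(1+\log(1/\gamma))$ XORs.

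The main obstacle is the sparse rank bound powering part~(a): showing that an $n\times n$ random $\mathbb{F}_2$ matrix with row weight $\Theta(\log n)$ is non-singular with constant probability needs fine combinatorial control of the expected number of low-weight kernel vectors, and any looser estimate would force $m$ to exceed $n$ by more than $O(\log\log n)/r$ and inflate the space overhead. I would invoke this result as a black box and spend the remaining work on the seed-retry accounting and on matching the peeling threshold rate $1+e^{-\Theta(k)}$ to the $1+\gamma$ target in part~(b).
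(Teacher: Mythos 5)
Your framing of the problem as finding a right-hand-side-universal solution to a sparse linear system over $\mathrm{GF}(2)$ is exactly the paper's, and your proof sketch for part~(a) is essentially correct and matches the paper's route: take $m=n$, row weight $\Theta(\log n)$, invoke a constant-probability non-singularity result (the paper uses Cooper's theorem), retry the seed $O(\log n)$ times and pay $O(\log\log n)$ bits to record which trial succeeded, and solve by Gaussian elimination in $O(n^3)$. The one technicality you skip is that the cited rank bound (Cooper) is stated for independent Bernoulli entries, not for rows of fixed weight; the paper bridges this gap by choosing each row's weight $k(x)$ by sampling the (truncated) binomial distribution via an auxiliary hash function. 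This is a detail worth flagging, but your outline for (a) is sound.

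Your proof of part~(b), however, contains a genuine error. You propose to solve the system by hypergraph peeling and assert that the peeling (empty $2$-core / acyclicity) threshold behaves like $1+e^{-\Theta(k)}$. That is not true: the peelability threshold is about $1.222$ for $k=3$ and \emph{worsens} as $k$ grows ($\approx 1.295$ for $k=4$, and increasing). The quantity that does behave like $1+e^{-\Theta(k)}$ is Calkin's \emph{full-row-rank} threshold $\beta_k^{-1}$, which is strictly weaker than peelability (echelon form by row/column permutation is sufficient but not necessary for full row rank). So with peeling you can never get $m$ below roughly $1.22n$, no matter how large you make $k$, and your construction cannot reach $(1+\gamma)nr$ for small $\gamma$. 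To achieve both $(1+\gamma)nr$ space and $O(n)$ construction time, the paper instead (i) keeps the full-rank criterion with some fixed $k$ for which $(1+\delta)\beta_k>1$, (ii) hashes $S$ into $\Theta(n/\sqrt{\log n})$ blocks of expected size $\Theta(\sqrt{\log n})$ and solves the resulting tiny linear systems with precomputed $o(n)$-size lookup tables so that each block is handled in $O(1)$ time, and (iii) diverts the $o(n)$ keys in overfull blocks or blocks with singular systems to a secondary $O(n')$-size retrieval structure, XOR-compensating for its contribution in the primary table. This bucketing-plus-tables trick, not peeling, is what delivers linear construction time while staying at the Calkin density.
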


Our basic data structure and query evaluation algorithm is the same as in~\cite{CKRT04,p:MWHC96}.
The new contribution is to analyze a different construction algorithm (suggested in~\cite{HS94}) that is able to achieve a better space usage. Our analysis needs tools and theorems from linear algebra, while that of~\cite{CKRT04,p:MWHC96} is based on random graph theory (\cite{HS94} provided only experimental results). To get a data structure that allows expected linear construction time we devise a new variant of the data structure and query evaluation algorithm, retaining simplicity and non-adaptivity.

Our second contribution is to point out an intimate connection between the approximate membership problem and the retrieval problem:

\begin{theorem}\label{thm:reduction:approximate:retrieval}
Assuming free access to fully random hash functions, any static retrieval data structure can be used to implement an approximate membership data structure having false positive probability $2^{-r}$, with no additional cost in space, and $O(1)$ extra time. This reduction is near-optimal in the sense that it can be used to solve the approximate membership problem in space that is within $O(\log\log n)$ bits of optimal whp.
\end{theorem}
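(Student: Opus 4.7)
The plan is to build the approximate membership filter by storing a random $r$-bit signature for each element of $S$ inside a retrieval structure. Concretely, fix a fully random hash function $q\colon U\to\{0,1\}^r$ that is independent of whatever hash functions the retrieval structure uses internally; define $f\colon U\to\{0,1\}^r$ to satisfy $f(x)=q(x)$ for every $x\in S$ (values off $S$ are irrelevant), and build a retrieval data structure $R$ for $f$. A membership query on $x$ then simply computes $R(x)$ and $q(x)$, answering ``$x\in S$'' iff the two agree.

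For $x\in S$ the retrieval contract guarantees $R(x)=f(x)=q(x)$, so positives are always reported correctly. For $x\notin S$ the value $R(x)$ is a deterministic function of the multiset $\{q(y)\colon y\in S\}$ together with the internal random choices used in $R$'s construction, all of which are independent of $q(x)$; hence $q(x)$ is uniform in $\{0,1\}^r$ and independent of $R(x)$, giving $\Pr[R(x)=q(x)]=2^{-r}$, exactly the target false-positive probability $\varepsilon$. The reduction adds no bits of space beyond those of $R$, and only $O(1)$ time to evaluate and compare the signature.

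For near-optimality I would invoke part~(a) of Theorem~\ref{thm:main} with $r=\log_2(1/\varepsilon)$: the resulting filter consumes $nr+O(\log\log n)$ bits whp., while the Carter \emph{et al}.\ bound recalled in the introduction (see Appendix~\ref{app:approx-lower}) requires at least $n\log_2(1/\varepsilon)=nr$ bits. The gap is $O(\log\log n)$ bits, which is exactly the claimed near-optimality.

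The main point that needs care, though not technically deep, is the independence argument in the false-positive analysis: one must insist that the hash functions used inside $R$ are sampled independently of $q$, for otherwise the ``don't care'' value $R(x)$ on a non-key could be correlated with the signature it is compared against, invalidating the $2^{-r}$ bound. With free access to fully random hash functions this is trivially arranged. Observe that the argument is per-query and does not need a union bound over $U\setminus S$, matching the per-query formulation of approximate membership in the problem definition.
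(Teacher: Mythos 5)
Your proof is correct and follows essentially the same route as the paper: store a random $r$-bit signature $q(x)$ for each $x\in S$ in the retrieval structure, accept on a query $x$ iff the retrieved value matches $q(x)$, and argue the false-positive bound from the independence of $q(x)$ from the construction of the data structure for $x\notin S$. Your explicit emphasis on sampling $q$ independently of the retrieval structure's internal hash functions, and on citing Theorem~\ref{thm:main}(a) together with the Carter \emph{et al}.\ lower bound for the near-optimality claim, matches what the paper leaves somewhat implicit but is the same argument.
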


The papers on Bloom filters, and the papers on retrieval~\cite{CKRT04,p:MWHC96} all make the assumption of access to fully random hash functions, as in the above. We show how our data structures can be realized on a RAM, with a small additional cost in space:

\begin{theorem}\label{thm:with:split:and:share}
In the setting of Theorem~\ref{thm:main}, for some $\varepsilon > 0$, we can avoid the assumption of fully random hash functions and get data structures with the following space and time complexities\emph{:}
\begin{description}\setlength{\itemsep}{0pt}
\item[{\rm(a)}] Space $nr+O(n^{1-\varepsilon})$ bits, query time $O(\log n)$, expected construction time $O(n^{1+\gamma})$.
\item[{\rm(b)}] Space $(1+\gamma) nr$ bits, query time $O(1+\log(\frac1\gamma))$, expected construction time $O(n)$
.
\end{description}
\end{theorem}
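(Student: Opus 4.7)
The plan is to combine Theorem~\ref{thm:main} with a standard \emph{split-and-share} strategy, so that full randomness is only needed on small subsets of keys. First, using a universal (or $O(1)$-wise independent) hash function $h_0\colon U\to[m]$ with $m=\Theta(n^{1-\alpha})$ for a small constant $\alpha>0$ to be chosen, I would partition $S$ into buckets $S_1,\ldots,S_m$. A Chernoff-type argument (which survives the moderate independence of~$h_0$) shows that whp every bucket has $|S_i|=O(n^\alpha)$. Each bucket is then handled by a private instance of the data structure of Theorem~\ref{thm:main}. Summing, the payload space is still $nr$ bits, while the per-bucket overheads aggregate to $O(m\log\log n)=O(n^{1-\alpha}\log\log n)$ bits in case~(a) and to $\gamma n r$ extra bits in case~(b). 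Construction time per bucket is $O(|S_i|^3)$ or $O(|S_i|)$, summing to $O(n^{1+2\alpha})$ and $O(n)$ respectively; choosing $\alpha\le\gamma/2$ matches the $O(n^{1+\gamma})$ budget for~(a), and any $\alpha\in(0,1)$ works for~(b).

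The second ingredient is to feed each bucket a ``fully random looking'' hash function without paying $\Omega(n^\alpha)$ bits per bucket. I would use a single shared random bit-string $T$ of length $n^{1-\alpha'}$ (for an appropriate $\alpha'>0$), together with a per-bucket-indexed hash family of Dietzfelbinger--Woelfel/Pagh--Pagh type. Such families have the property that, when evaluated on an arbitrary but fixed set of at most $n^\alpha$ keys, they are statistically indistinguishable from a truly random function except with probability $n^{-\omega(1)}$. A union bound over the $m=O(n^{1-\alpha})$ buckets then certifies that, whp, within every bucket the hash values look fully random, so the analyses behind Theorem~\ref{thm:main}(a) and~(b) carry over verbatim. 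The descriptions of $T$, of $h_0$, and of the family together cost $o(n)$ bits, which is absorbed into $O(n^{1-\varepsilon})$ for a slightly smaller~$\varepsilon$.

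The main obstacle will be making the full-randomness simulation robust in two respects. First, its per-bucket failure probability must beat $1/m$ so that the union bound survives; this dictates how small $\alpha$ must be relative to $\alpha'$ and to the independence parameter of the shared family. Second, $h_0$ and~$T$ must be (nearly) independent, or else an adversarial interaction between the first-level split and the inner hashes could simultaneously unbalance bucket sizes and bias the within-bucket randomness. Both points can be handled by sampling $h_0$ from a pairwise-independent family chosen independently of~$T$ and re-drawing $(h_0,T)$ whenever any bucket fails to construct; since the overall failure probability is $n^{-\Omega(1)}$, the expected number of re-draws is $O(1)$ and the asymptotic construction times stated above are preserved.
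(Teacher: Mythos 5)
Your proposal is correct and matches the paper's own strategy, which is precisely the ``split-and-share'' technique sketched in Appendix~\ref{sec:split:and:share}: split $S$ into many small chunks via a first-level limited-independence hash function, and then use a small amount of shared random table space to produce hash functions that are (effectively) fully random on each chunk, so that the fully-random analysis of Theorem~\ref{thm:main} applies blockwise. There are two genuine, if mild, differences. First, you parametrize the split as $m = \Theta(n^{1-\alpha})$ buckets of size $O(n^{\alpha})$ and then tune $\alpha$ against $\gamma$; the paper's appendix instead fixes a concrete setting ($m \geq 2n^{2/3}$ with a $4$-universal $h^0$, chunk size $\leq \sqrt n$) and defers the full parameter bookkeeping to~\cite{Die07}. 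Your parametrized version makes the claimed $O(n^{1+\gamma})$ construction time for part~(a) visibly achievable (total $O(\sum_i n_i^3) = O(n^{1+2\alpha})$), whereas the concrete setting in the appendix alone only gives $O(n^{13/6})$; so your write-up is, if anything, a bit more careful on this point. Second, you appeal to ``statistical indistinguishability from a truly random function'' for the shared hash family and then apply a union bound over buckets; the paper instead uses the explicit two-table scheme of Lemma~\ref{lem:sim:randomness}, which, after an $O(|S_i|)$-time search for a good pair $(h_{i,0},h_{i,1})$, produces functions that are \emph{exactly} fully random on $S_i$, avoiding any accumulated statistical error. Both routes work, and your concern about the interaction between the splitter $h_0$ and the shared tables, and about the per-bucket failure probability having to beat $1/m$, is exactly the right thing to check; in the paper's scheme these issues are resolved by the lemma's exact-randomness guarantee and by independent resampling, which you also propose. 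One small caveat: to get chunk size $O(n^{\alpha})$ with a union bound over $n^{1-\alpha}$ buckets you need $\Theta(1/\alpha)$-wise independence in $h_0$, not merely universal or pairwise hashing, so ``$O(1)$-wise'' should really be read as ``$c(\alpha)$-wise for a sufficiently large constant $c(\alpha)$''; this is still $O(\log n)$ description bits and constant evaluation time, so it does not affect your space or time claims.
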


Our results have a couple of other implications in data structures. We improve the space usage of a recent simple construction of (minimal) perfect hashing of Botelho \emph{et al}.~\cite{BPZ07} (Section~\ref{sec:perfect:hashing}). In addition, we show a close relationship between ``cuckoo hashing''-like dictionaries and retrieval structures (Section~\ref{sec:hashing:retrieval}). This implies improved upper bounds on the space usage of $k$-ary cuckoo hashing~\cite{FPSS05} (or equivalently, of the 1-orientability threshold of a $k$-uniform random hypergraph with $n$ edges).

\subsection{Overview of paper}

Section~\ref{sec:retrieval} describes our basic retrieval data structure and its analysis, using a result due to Calkin~\cite{Cal97}. This leads to part (b) of Theorem~\ref{thm:main}, except that the construction time is $O(n^3)$. Part (a) is shown in Section~\ref{sec:retrieval:optimal:space}, using a result of Cooper~\cite{Coo00}. The reduction of approximate membership to retrieval, Theorem~\ref{thm:reduction:approximate:retrieval}, is presented in Section~\ref{sec:bloom}. Section~\ref{sec:linear-retrieval} completes the proof of part (b) of Theorem~\ref{thm:main}  by showing how the construction algorithm can be made to run in linear time. Section~\ref{sec:perfect:hashing} describes an application of our results to perfect hashing. Some issues, such as circumventing the full randomness assumption, 
leading to Theorem~\ref{thm:with:split:and:share}, are discussed in the appendices.

\section{Retrieval in constant time and almost optimal space}\label{sec:retrieval}
In this section, we give the basic construction
of a data structure for retrieval with constant time
lookup operation and $(1+\delta)nr$ space.
As a technical basis, we start with describing a result by Calkin~\cite{Cal97}
regarding the probability that 
0-1-matrices with sparse rows chosen randomly have full row rank.

\subsection{Calkin's results}\label{sec:Calkin}

All calculations are over the field 
GF$(2)=\mathbb{Z}_2$ with
$2$ elements. 
We consider binary matrices $M=(p_{ij})_{1\le i \le n,0 \le j < m}$
with $n$ rows and $m$ columns.
If $M$ is such a matrix,
then row vector $(p_{i0},\ldots,p_{i,m-1})$ is called $p_i$,
for $1\le i \le n$.

\begin{theorem}[Calkin \protect{\cite[Theorem 1.2]{Cal97}}]
\label{thm:Calkin}
For every $k>2$ there is a constant $\beta_k<1$ such that the following 
holds\emph{:} 
Assume the $n$ rows $p_1,\ldots,p_n$ of a matrix $M$ are chosen at random 
from the set of binary vectors of length $m$ and weight \emph{(}number of $1$s\emph{)} exactly $k$. Then\emph{:}
\begin{itemize}\setlength{\itemsep}{0pt}
\item[{\rm (a)}] If $n/m \le \beta < \beta_k$,
                  then $\Pr(\mbox{$M$ has full row rank})\to1$ \emph{(}as $n\to \infty\emph{)}$.
\item[{\rm (b)}] If $n/m \ge \beta > \beta_k$,
                 then $\Pr(\mbox{$M$ has full row rank})\to0$ \emph{(}as $n\to \infty\emph{)}$.
\end{itemize}
Furthermore, $\beta_k - (1-(e^{-k}/(\ln2)) \to 0$ for $k\to\infty$ 
\emph{(}exponentially fast in $k$\emph{)}. 
\end{theorem}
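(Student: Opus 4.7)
The claim is a sharp threshold result for the random ensemble $M(n,m)$ whose rows are drawn iid uniformly from weight-$k$ binary vectors. I would attack it via the first-moment method applied to the number of linearly dependent subsets of rows: let $X_s$ count the $s$-subsets of rows whose XOR vanishes, so $M$ fails to have full row rank iff $\sum_{s\ge 1} X_s > 0$. A union bound reduces part~(a) to showing $\sum_{s=1}^{n} \binom{n}{s} p_s \to 0$, where $p_s$ is the probability that $s$ iid uniform weight-$k$ vectors in $\{0,1\}^m$ sum to zero over GF$(2)$.

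The heart of the matter is the computation of $p_s$, which I would carry out via Fourier analysis on $\{0,1\}^m$. Using the identity $\E_X[(-1)^{\langle y, X\rangle}] = K_k(|y|)/\binom{m}{k}$, where $K_k$ is the Krawtchouk polynomial, one gets
\[
p_s \;=\; 2^{-m}\sum_{w=0}^m \binom{m}{w}\left(\frac{K_k(w)}{\binom{m}{k}}\right)^{\!s}.
\]
This is an exponential sum to be analyzed by Laplace / saddle-point in the scaled parameters $\alpha=s/m$ and $\omega=w/m$; the threshold $\beta_k$ then emerges as the supremum of $\beta$ for which the rate function governing $\binom{n}{s} p_s$ stays strictly negative uniformly in $(\alpha,\omega)$. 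Part~(b) would be handled by complementing the first-moment bound with a Paley--Zygmund second-moment argument on $X_s$ at the dominant $\alpha$, or by exhibiting explicit short dependencies (pairs of equal rows, short XOR-cycles) that must appear once $n$ exceeds $\beta_k m$.

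The main obstacle is obtaining \emph{uniform} control of $\binom{n}{s} p_s$ across all subset sizes: for $s=O(1)$ the contribution is driven by short combinatorial dependencies, while for $s$ proportional to $m$ the saddle involves a genuine two-variable optimization. Particular care is needed near $\omega = 1/2$, where $|K_k(w)|$ is smallest, and near $\omega\in\{0,1\}$, where it is largest, so that the two regimes stitch together to give a single clean threshold $\beta_k$. The asymptotic $\beta_k - (1-e^{-k}/\ln 2) \to 0$ exponentially fast in $k$ should fall out of a large-$k$ expansion of the Krawtchouk ratio: for fixed $\omega\in(0,1)$ and large $k$, $K_k(\omega m)/\binom{m}{k}$ is exponentially small in $k$, so the sum collapses onto a few boundary and short-dependency terms, and the leading constraint for full rank becomes the condition that the $2^n$ possible XORs essentially fit inside $\{0,1\}^m$, up to a correction of order $e^{-k}/\ln 2$ accounting for the persistence of duplicate rows and other small collisions.
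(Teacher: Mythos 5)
The theorem you are proving is not actually proved in this paper; it is stated as Calkin's Theorem~1.2 and cited from \cite{Cal97}, with the paper only recording the variational characterization of $\beta_k$ via the rate function $f(\alpha,\beta)$ in~(\ref{eq:10}). Your outline correctly reconstructs Calkin's own method: first-moment on the number $X=\sum_s X_s$ of linearly dependent row-subsets, the Krawtchouk/Fourier identity $p_s=2^{-m}\sum_{w}\binom{m}{w}\bigl(K_k(w)/\binom{m}{k}\bigr)^s$, exponential-scale analysis, and a second-moment (Paley--Zygmund type) argument for the supercritical direction~(b). One simplification you have missed is worth pointing out: if you swap the order of summation over $s$ and $w$ and apply the binomial theorem, the sum over $s$ telescopes, giving
\[
\E[X]\;=\;2^{-m}\sum_{w=0}^{m}\binom{m}{w}\Bigl[\bigl(1+K_k(w)/\tbinom{m}{k}\bigr)^{n}-1\Bigr],
\]
so the Laplace analysis is a one-parameter optimization in $\omega=w/m$ alone, not the two-variable $(\alpha,\omega)$ saddle you anticipate; with $K_k(\omega m)/\binom{m}{k}\to(1-2\omega)^k$, the exponent per column is exactly $-\ln2+H(\omega)+\beta\ln\bigl(1+(1-2\omega)^k\bigr)$, which is precisely the paper's $f(\omega,\beta)$. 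You should also note that $\omega=\tfrac12$ always gives $f=0$ (the trivial boundary), so $\beta_k$ is defined as the first $\beta$ for which a zero of $f(\cdot,\beta)$ appears in the open interval $(0,\tfrac12)$; the large-$k$ expansion $\beta_k=1-e^{-k}/\ln2+O(k^2e^{-2k})$ then follows by perturbing that zero away from $\omega=\tfrac12$, where $(1-2\omega)^k$ is small, rather than from the hand-wavy ``XORs fit inside $\{0,1\}^m$'' heuristic you give. Finally, for part~(b) you are right that a bare first moment does not suffice; Calkin needs (and you would need) second-moment control showing $\E[X^2]/\E[X]^2\to1$ at the dominant scale, which is the genuinely technical part of the supercritical argument and should not be left as an unexamined alternative.
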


\begin{remark}\label{remark:cycles:in:graphs}
\begin{rm}
It has been noted earlier in related work 
\cite{p:MWHC96}  
that the question whether a matrix 
with $m$ columns and randomly chosen 
rows of weight 2 has full row rank is equivalent
to the question whether the graph 
with $M$ as its vertex-edge incidence 
matrix is cyclic.
The threshold value for this case is $\beta_2=2$,
as is well known from the theory of random graphs.
In \cite{p:MWHC96} and \cite{BPZ07} it is 
explored how this fact can be used for constructing
perfect hash functions, in a way that 
implicitly includes the construction of 
retrieval structures.   
\end{rm}
\end{remark}

\begin{remark}\label{remark:convergence:speed}
\begin{rm}
A closer look into the proof of Theorem 1.2 in~\cite{Cal97}
reveals that for each $k$ there is some $\varepsilon=\varepsilon_k>0$
such that 
in the situation of Theorem~\ref{thm:Calkin}(a) 
we have
$\Pr(\mbox{$M$ has full row rank})=1-O(n^{-\varepsilon})$. 
The following values are suitable: $\varepsilon_3=\frac27$, $\varepsilon_4=\frac57$,
$\varepsilon_k=1$ for $k\ge5$. 
\end{rm}
\end{remark}

According to \cite{Cal97}, the threshold value $\beta_k$ is characterized as follows:
Define
\begin{equation}\label{eq:10}
	f(\alpha,\beta)= -\ln 2 - \alpha\ln\alpha-(1-\alpha)\ln(1-\alpha) + \beta\ln(1+(1-2\alpha)^k),
\end{equation}
for $0< \alpha < 1$. Let $\beta_k$ be the minimal $\beta$ so that
$f(\alpha,\beta)$ attains the value 0 for some $\alpha\in(0,\frac12)$.
Using a computer algebra system, 
it is easy to find approximate values for small $k$, see Table~\ref{tab:ThresholdValues}.
This table also lists upper bounds for the reciprocals $\beta_k^{-1}$, 
since these are the figures we will utilize later. 
Calkin further proves that 
\begin{equation}\label{eq:417}
\beta_k = 1 - \frac{e^{-k}}{\ln 2} 
         - \frac{1}{2\ln2}\Bigl(k^2-2k+\frac{2k}{\ln2}-1\Bigr)\cdot e^{-2k} \pm O(k^4)\cdot e^{-3k},
\end{equation}
as $k\to\infty$. It seems that the approximation obtained by
omitting the last term in (\ref{eq:417}) is quite good already for small values of $k$.
(See the row for $\beta_k^{\rm appr}$ in Table~\ref{tab:ThresholdValues}.) 

\begin{table}
	\begin{center}
		\begin{tabular}{c||c|c|c|c}
			$k$  & 3 & 4 & 5 & 6 \\
			\hline\hline
			$\beta_k$  & 0.88949 & 0.96714 & 0.98916 &0.99622\rule[-5pt]{0pt}{18pt}\\
			\hline
			$\beta_k^{\rm appr}$  & 0.9091 & 0.9690 & 0.9893 &0.99624\rule[-5pt]{0pt}{18pt}\\
			\hline
			$\beta_k^{-1}$  & 1.1243 & 1.034 & 1.011 & 1.0038\rule[-5pt]{0pt}{18pt}
		\end{tabular}
	\end{center}
	\caption{Approximate threshold values from Theorem~\ref{thm:Calkin}, using (\ref{eq:10}) and (\ref{eq:417}).}
	\label{tab:ThresholdValues}
\end{table}


\begin{remark}
\begin{rm}
Results similar to those of Calkin~\cite{Cal96,Cal97}, but for a different model,
were obtained independently by Balakin, Kolchin, and  Khokhlov
\cite{BKK92,Kol94,KK95}. Further results in a similar vein can be found in
a paper by Cooper~\cite{Coo99}.
\end{rm}
\end{remark}

\subsection{The basic retrieval data structure}
\label{sec:retrieval:with:full:randomness}

Now we are ready to describe a retrieval data structure.
Assume $f\colon S \to \{0,1\}^r$ is given, for a set $S=\{x_1,\ldots,x_n\}$.
For a given (fixed) $k\ge3$ let $1+\delta>\beta_k^{-1}$
be arbitrary and let $m=(1+\delta)n$.
We can arrange the lookup time to
be $O(k)$ and the number of bits in the data structure to be
$mr=(1+\delta)nr$ plus lower order terms.%
\footnote{For simplicity, 
in the notation we suppress rounding nonintegral values to a suitable near integer.} 

We assume that we have access to $k$ hash functions
with ranges $[m],\ldots,[m-k+1]$
that behave fully randomly on the keys of $S$, 
and that, in case the construction below fails, we may switch to 
a new independent set of $k$ hash functions,
again random on $S$. 
It is not hard to see that this assumption makes it possible
to define a mapping
$
U\ni x \mapsto A_x \in \dbinom{[m]}{k}
$,
where $\binom{X}{k}$ denotes the set of all subsets of $X$ with $k$ elements,
so that computing $A_x$ from $x\in U$ takes time $O(k)$, 
and so that $(A_x)_{x\in S}$
is fully random on $S$.
(For details see Appendix~\ref{sec:creating:sets}.)
We need to store a few bits to record which 
set of hash functions was used in the successful construction. 

The construction starts from $=\{x_1,\ldots,x_n\}$
and the bit strings $u_i=f(x_i)\in\{0,1\}^r$, $1\le i \le n$.
We consider the matrix
\begin{equation}\label{eq:500}
M=(p_{ij})_{1\le i \le n, 0\le j <m},\mbox{ with $p_{ij}= 1$ if $j\in A_{x_i}$ and $p_{ij}= 0$ otherwise.}
\end{equation}
Theorem~\ref{thm:Calkin}(a) (with Remark~\ref{remark:convergence:speed}) says that 
$M$ has full row rank with probability $1-O(n^{-\varepsilon})$ for some
$\varepsilon>0$. Assume $n$ is so large that this happens with probability
at least $\frac34$.

If $M$ does have full row rank, 
the column space of $M$ is all of $\{0,1\}^n$,
hence for all $u\in\{0,1\}^n$
there is some $a\in\{0,1\}^m$ with $M \cdot a=u$.
More generally, we arrange the bit strings $u_1,\ldots,u_n\in\{0,1\}^r$ 
as a column vector $u=(u_1,\ldots,u_n)^{\sf T}$.
We stretch notation a bit (but in a natural way)
so that we can multiply binary matrices with vectors of $r$-bit strings:
multiplication is just bit/vector multiplication and addition is bitwise XOR.
It is then easy to see, working with the components of the $u_i$ separately, 
that there is a (column) vector $a=(a_0,\ldots,a_{m-1})^{\sf T}$ with entries in $\{0,1\}^r$
such that $M\cdot a=u$.
---
We can rephrase this as follows
(using $\oplus$ as notation for bitwise XOR): 
For $a\in(\{0,1\}^r)^m$ and $x\in U$ define 
\begin{equation}\label{eq:405}
h_a(x)={\textstyle\bigoplus\limits_{j\in A_x}} a_j.
\end{equation}
Then for an arbitrary sequence $(u_1,\ldots,u_n)$ 
of prescribed values from $\{0,1\}^r$
there is some $a\in(\{0,1\}^r)^m$ with
$h_a(x_i) = u_i$, for $1\le i \le n$.
Such a vector $a\in(\{0,1\}^r)^m$,
together with an identifier for the set of
hash functions used in the successful construction,
is a data structure for
retrieving value $u_i=f(x_i)$, given $x_i$. 
There are $k$ accesses to the data structure,
plus the effort to evaluate $k$ hash functions on $x$ 
and calculate the set $A_x$ from $x$ (see Appendix~\ref{sec:creating:sets}).

\begin{remark}\label{remark:seiden:hirschberg}
\begin{rm}
A similar construction (over arbitrary fields GF$(q)$) 
was described by Seiden and Hirschberg~\cite{HS94}
. 
However, those authors did not have
Calkin's results, and so could not
give theoretical bounds on the number $m$
of columns needed.
Also, our construction generalizes the 
approach of Chazelle \emph{et al.}~\cite{CKRT04,p:MWHC96}
who required that $M$ could be transformed into
echelon form by permuting rows and columns,
which is sufficient, but not necessary,
for $M$ to have full row rank.
\end{rm} 
\end{remark}

Some details of the construction are missing.
We describe one of several possible ways to proceed. 
--- From $S$, we first calculate the sets $A_{x_i}$, $1\le i \le n$, 
in time $O(n)$. 
Using Gaussian elimination, 
we can check whether the induced matrix $M=(p_{ij})$ has full row rank.
If this is not the case,
we start all over with a new set of $k$ hash functions, leading to new sets $A_{x_i}$.
This is repeated until a suitable matrix $M$ is obtained. 
The expected number of repetitions is $1+O(n^{-\varepsilon})$.
For a matrix $M$ with independent rows Gaussian elimination will
also yield a ``pseudoinverse'' of $M$, that is, 
an invertible $n \times n$-matrix $C$ 
(coding a sequence of elementary row transformations
without row exchanges) with the property that
in $C\cdot M$ the $n$ unit vectors occur as columns: 
\begin{equation}\label{eq:C:times:M}
	\mbox{$\forall i$, $1\le i \le n$, $\exists$  $b_i\in[m]$: column $b_i$ of $C\cdot M$ 
	equals $e_i^{\sf T}=(0,\ldots,0,1,0,\ldots,0)^{\sf T}$.} 
\end{equation}
Given $u=(u_1,\ldots,u_n)\in\{0,1\}^n$ we wish to 
find a solution $a\in\{0,1\}^m$ of the system
\begin{equation}\label{eq:retrieval:10}
(C\cdot M) \cdot a = C \cdot u = u'=(u_1',\ldots,u_n')^{\sf T}.
\end{equation} 
Since $C\cdot M$ has the unit vectors in columns $b_1,\ldots,b_n$, 
we can easily read off a special $a$ that solves~(\ref{eq:retrieval:10}):
Let $a_j=0$ for $j\notin \{b_1,\ldots,b_n\}$,
and let $a_{b_i}=u_i'$ for $1\le i \le n$.
Exactly the same formula works if $u$, $u'$, 
and $a$ are vectors of $r$-bit strings.
---
We have established the following.  

\begin{theorem}
Assume that fully random hash functions from keys $x\in S$
to ranges $[m],\ldots,[m-k+1]$ are available
\emph{(}with the option to choose such functions repeatedly
and independently\emph{)}. Let $k > 2$ be fixed,
and let $1 + \delta >\beta_k^{-1}$.

Then for $n$ large enough the following holds\emph{:}
Given $S=\{x_1,\ldots,x_n\}$ and 
a sequence $(u_1,\ldots,u_n)$ of prescribed elements in $\{0,1\}^r$,
we can find a vector  $a=(a_0,\ldots,a_{m-1})$ with elements in $\{0,1\}^r$
such that $h_a(x_i)=u_i$, for $1\le i \le n$.

The expected construction time is $O(n^3)$, the scratch space needed is $O(n^2)$.
\end{theorem}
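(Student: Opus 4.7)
The plan is to formalize the construction already sketched just before the statement and check that each ingredient delivers what the theorem claims. First I fix $k>2$ and $1+\delta>\beta_k^{-1}$, set $m=(1+\delta)n$, and, using the assumed fully random hash functions, obtain the map $x\mapsto A_x\in\binom{[m]}{k}$ from Appendix~\ref{sec:creating:sets}, so that $(A_{x_i})_{i=1,\ldots,n}$ is uniform and independent over $\binom{[m]}{k}$. I then form the $n\times m$ incidence matrix $M=(p_{ij})$ from~(\ref{eq:500}). Its rows are i.i.d.\ indicator vectors of uniform $k$-subsets of $[m]$, which is precisely the model of Theorem~\ref{thm:Calkin}. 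Since $n/m = 1/(1+\delta)<\beta_k$, part~(a) of Theorem~\ref{thm:Calkin} together with Remark~\ref{remark:convergence:speed} gives $\Pr(M \text{ has full row rank})=1-O(n^{-\varepsilon})$ for a fixed $\varepsilon=\varepsilon_k>0$. Choose $n$ large enough that this probability is at least $\tfrac34$.

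Next I run Gaussian elimination over $\mathrm{GF}(2)$ on $M$. This simultaneously (i) decides whether $M$ has full row rank, (ii) produces an invertible $n\times n$ matrix $C$ that records the row operations performed (no row swaps are needed, only additions), and (iii) identifies $n$ pivot column indices $b_1,\ldots,b_n$ such that column $b_i$ of $C\cdot M$ equals the unit vector $e_i^{\sf T}$, as in~(\ref{eq:C:times:M}). If $M$ is rank-deficient, discard the hash functions and resample; the number of trials is geometric with success probability $\geq\tfrac34$, so expected $O(1)$ restarts suffice. Once a good $M$ is in hand, compute $u'=C\cdot u$ by performing the same sequence of row operations on the $r$-bit vector $u=(u_1,\ldots,u_n)^{\sf T}$, and define $a_{b_i}:=u_i'$ for $i=1,\ldots,n$ and $a_j:=0$ for $j\notin\{b_1,\ldots,b_n\}$. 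By construction $(C\cdot M)\cdot a=u'$, and multiplying by $C^{-1}$ gives $M\cdot a=u$, i.e., $h_a(x_i)=u_i$ for all $i$. The argument applies component-wise to each of the $r$ bit positions, so it transfers verbatim to vectors of $r$-bit strings.

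For complexity, a single round of Gaussian elimination on an $n\times m$ matrix over $\mathrm{GF}(2)$ takes $O(n^2 m)=O(n^3)$ time and uses $O(nm)=O(n^2)$ bits of scratch space to hold $M$ and $C$; forming the $A_{x_i}$ and computing $u'$ and $a$ are lower-order costs. Since the expected number of restarts is $1+O(n^{-\varepsilon})=O(1)$, the expected total construction time is $O(n^3)$ and scratch space stays $O(n^2)$. The only potentially delicate points are the matching of our distribution on rows of $M$ with Calkin's hypothesis---guaranteed by the appendix construction of the $A_x$---and the observation that the linear-algebra argument respects the component-wise extension from bits to $r$-bit strings; neither is a serious obstacle, so I expect the main work in a fully written proof to be bookkeeping rather than new ideas.
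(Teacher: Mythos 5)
Your proposal follows the paper's own proof essentially step for step: form $M$ from the appendix's construction of the sets $A_x$, invoke Calkin's theorem with Remark~\ref{remark:convergence:speed} to get full row rank with probability $1-O(n^{-\varepsilon})$, run Gaussian elimination to produce the pseudoinverse $C$ and pivot columns $b_1,\ldots,b_n$ satisfying~(\ref{eq:C:times:M}), read off $a$ from $u'=C\cdot u$, and note that the argument extends coordinatewise to $r$-bit strings. The complexity accounting ($O(n^3)$ time, $O(n^2)$ scratch space, expected $O(1)$ restarts) also matches; this is correct and the same route as the paper.
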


\begin{remark}\begin{rm} 
We note that only $n$ entries of $a$ are significant, namely entries $b_1,\dots,b_n$. The other entries can be chosen arbitrarily, for example equal to {\bf 0}. This observation implies that there is an alternative data structure whose redundancy is independent of $r$. A constant time rank data structure (e.g.~Theorem 4.4 in~\cite{dict-jour}) can be used to identify the entries in $\{b_1,\dots,b_n\}$ and map them to entries in a ``compressed'' array of size $n$. The space usage of the rank data structure is within a lower order term of the entropy of the set $\{b_1,\dots,b_n\}$, which is $\log_2 \binom{m}{n}$. The drawback of this is that accesses to the compressed array are now adaptive, as they depend on lookups in the rank data structure.
\end{rm}
\end{remark}

\begin{remark}\begin{rm} 
At the first glance, the time complexity 
of the construction seems to be  forbiddingly large.
However, using a trick described in Appendix~\ref{sec:split:and:share}
(``split-and-share'')
makes it possible to obtain a data structure
with the same functionality and space bounds (up to a $o(n)$ term) in time
$O(n^{1+\gamma})$ for any given $\gamma>0$.
In Section~\ref{sec:linear-retrieval} we show how
to construct a retrieval structure 
with essentially the same space requirements in expected linear time.
\end{rm}
\end{remark}

\section{A retrieval structure with optimal space}
\label{sec:retrieval:optimal:space}

The purpose of this section is to prove Theorem~\ref{thm:main}(a), i\,e., to show that
there is a data structure supporting retrieval
that requires only space $nr+O(\log\log n)$ bits whp.~%
(note that $nr$ bits is a lower bound), and 
in which one retrieval operation takes logarithmic
time. More precisely, $O(\log n)$ table entries
are read and combined by XOR.
The idea is the following. 
We use the same setup as in Section~\ref{sec:retrieval:with:full:randomness},
excepting that the range size $m$ is equal to $n$, and that $k$, 
the size of the sets $A_x$, is 
chosen to be $\Theta(\log n)$.
We set up the matrix $M$ as in (\ref{eq:500}).
The rows correspond to the $n$ keys $x_1,\ldots,x_n$ in $S$,
the columns to the range $[n]$. 
In order to argue that the induced $n\times n$-matrix $M$
has full rank at least with constant probability, 
we wish to use the following theorem.
 
\begin{theorem}[Cooper \cite{Coo00}, Theorem~2(a)] 
\label{thm:cooper}
Let $M=(p_{ij})_{1\le i\le n,0\le j <n}$ 
be an $n\times n$-matrix filled with $0$s and $1$s in
the following way\emph{:}
The entries are chosen independently, 
and each entry $p_{ij}$ is $1$ with probability $p=p(n)=(c+1)\log (n)/n$,
and $0$ with probability $1-p$, where $c>0$ is an arbitrary constant.  Then
$
\lim\limits_{n\to\infty}\Pr(\mbox{$M$ is regular}) = c_2
$,
where $c_2=\prod\limits_{1\le i \le n}(1-2^{-i})\approx 0.28879$.
\end{theorem}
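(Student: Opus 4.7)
\emph{Proof plan.} The plan is to show $\Pr(M\text{ is regular})\to c_2$ by reducing the analysis of the sparse random matrix at the threshold $p=(c+1)\log n/n$ to the behaviour of a dense random matrix over GF$(2)$ (with iid Bernoulli$(1/2)$ entries), for which the invertibility probability $\prod_{i=1}^{n}(1-2^{-i})\to c_2$ is classical.

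First, I would write the singularity event as $\{\exists v\ne 0:Mv=0\}$ and compute, for a fixed nonzero $v$ of Hamming weight $k$, the probability $\Pr(Mv=0)=\bigl(\tfrac{1+(1-2p)^k}{2}\bigr)^n$, using that a single random row $r$ satisfies $r\cdot v\equiv 0\pmod 2$ with probability $(1+(1-2p)^k)/2$ and that rows are independent. Summing over $v$ gives the first-moment identity $\E[|\ker M|-1]=\sum_{k=1}^{n}\binom{n}{k}\bigl(\tfrac{1+(1-2p)^k}{2}\bigr)^n$.

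Next, I would split the sum according to the regime of $k$. For small $k$ (say $k\le\log\log n$), Taylor expansion gives $(1-2p)^k\approx 1-2pk$, so the contribution is at most $\binom{n}{k}e^{-pkn}=O(n^{-ck})$, summing to $o(1)$; this rules out short linear dependencies, in particular zero rows. For the bulk range $\log\log n\le k\le n-\log\log n$, $(1-2p)^k$ is negligible, so each summand is essentially $\binom{n}{k}2^{-n}$, exactly matching the count of kernel vectors of a dense Bernoulli$(1/2)$ matrix. The range near $n$ reduces to the range near $0$ by transposing $M$.

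To pin down the sharp constant $c_2$ rather than just a first-moment bound, I would then perform a row-by-row rank computation. Let $V_{i-1}$ be the row span of the first $i-1$ rows. Conditioned on $V_{i-1}$ having rank $i-1$, the goal is to show $\Pr(r_i\in V_{i-1}\mid V_{i-1})=|V_{i-1}|/2^n\cdot(1+o(1))=2^{i-1-n}(1+o(1))$ uniformly over ``typical'' $V_{i-1}$; telescoping then yields $\Pr(M\text{ regular})=\prod_{i=1}^{n}(1-2^{i-1-n})(1+o(1))\to c_2$. The hard part is controlling the structure of $V_{i-1}$: if it contained an unusually sparse nonzero vector $w$, then $r_i\cdot w$ would be heavily biased toward $0$, inflating $\Pr(r_i\in V_{i-1})$. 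A character-sum (Fourier) analysis of $\Pr(r\in V_{i-1})=|V_{i-1}|^{-1}\sum_{w\in V_{i-1}}\prod_j p^{w_j}(1-p)^{1-w_j}$ shows the required cancellation holds as long as $V_{i-1}$ contains no vector of weight below some threshold, and the choice $p=(c+1)\log n/n$ is precisely the level at which this typicality of $V_{i-1}$ holds whp (the same level at which zero rows, the simplest atypical obstruction, disappear).
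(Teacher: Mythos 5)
The paper does not prove this statement; it imports Cooper's theorem \cite{Coo00} as a black box, so there is no in-paper argument to compare against. Judged on its own, your plan has the right silhouette --- use a first-moment bound to rule out sparse linear dependencies (this is where the threshold $p=(c+1)\log n/n$ enters), then argue row-by-row that $q_i:=\Pr(r_i\in V_{i-1}\mid \mathrm{rank}\;i{-}1)$ equals $2^{i-1-n}(1+o(1))$ and telescope --- and your per-weight probability $((1+(1-2p)^k)/2)^n$ is computed correctly.

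However, the step that actually carries the proof is wrong as written. The identity you state, $\Pr(r\in V)=|V|^{-1}\sum_{w\in V}\prod_j p^{w_j}(1-p)^{1-w_j}$, contains a spurious $|V|^{-1}$ (the unweighted sum already \emph{is} $\Pr(r\in V)$), and despite the label it is not a character sum. The genuine Fourier identity over GF$(2)$ is $\Pr(r\in V)=|V^\perp|^{-1}\sum_{w\in V^\perp}(1-2p)^{|w|}$: the main term $w=0$ gives $2^{\dim V-n}$, and the deviation is governed by the minimum weight of nonzero vectors in $V^\perp$, \emph{not} in $V$. Since $V_{i-1}^\perp$ is precisely the kernel of the partial matrix formed by rows $1,\ldots,i-1$, the obstruction you must exclude is a sparse vector in \emph{that} kernel; your proposal attributes the bias to sparse vectors of $V_{i-1}$ itself, which is the wrong object (and whose exclusion would not help). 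Two further gaps: transposing $M$ does not reduce high-weight kernel vectors to low-weight ones (it changes the kernel entirely), though in fact no reduction is needed there since $(1-2p)^k$ is already negligible for $k$ near $n$; and passing from per-$i$ estimates to the product asymptotic requires $\sum_{i=1}^n|q_i-2^{i-1-n}|=o(1)$, a bound on the \emph{accumulated} error over all $n$ steps, not $n$ separate instances of a pointwise $o(1)$ (already $i=1$ contributes $\approx(1-p)^n\approx n^{-(c+1)}$, so the error budget is genuinely tight). That accumulation analysis, carried out jointly with uniform control of the sparsity of every $V_{i-1}^\perp$, is where the real difficulty of Cooper's theorem lies, and it is glossed over in your proposal.
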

\textbf{Note.} $c_2$ is the probability that a random 0-1-matrix, 
each entry being 0 or 1 with probability $\frac12$, is regular. 
We will work with the constant $c=1$ or $p=2\log(n)/n$ throughout, 
but any other constant would do as well. 
The statement of the theorem in Cooper's paper is even more general. 

A slight difficulty arises
in that the number of $1$s in a row is fixed to 
be $k$ in the setup of Section~\ref{sec:retrieval:with:full:randomness},
and is binomially distributed in Cooper's theorem. 
The idea to resolve this is as follows:
The size $k(x)$ of set $A_x$, for $x\in S$, is chosen
at random according to the binomial distribution.
Then the rows of $M$ will have weight $\Theta(\log n)$ with high 
probability, as noted in the following lemma,
which is easy to prove by Chernoff bounds, e.\,g.,~\cite[Theorems 4.4 and 4.5]{MU:05}.

\begin{lemma}\label{lem:not:too:many:ones}
In the situation of Theorem~\ref{thm:cooper}, with $c=1$,
the probability that $M$ has a row in which there are more than $4\log n$ or
fewer than $\frac12\log n$ $1$s is $1/n^{\Omega(1)}$.
\end{lemma}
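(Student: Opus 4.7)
\medskip\noindent\textbf{Proof plan.} The plan is to observe that, because the entries of $M$ in the model of Theorem~\ref{thm:cooper} are independent Bernoulli$(p)$ with $p = 2\log(n)/n$, the number of $1$s in each row is distributed as Binomial$(n,p)$ with mean $\mu = np = 2\log n$. I will apply classical Chernoff tail inequalities (as in Theorems~4.4 and~4.5 of~\cite{MU:05}) to bound the probability that a single row is ``bad'' by $n^{-1-\Omega(1)}$, and then take a union bound over the $n$ rows.

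Concretely, for the upper tail I would use the multiplicative Chernoff bound
\[
\Pr\!\bigl[\,|\mathrm{row}_i| \geq (1+\delta)\mu\,\bigr] \;\leq\; \Bigl(\tfrac{e^{\delta}}{(1+\delta)^{1+\delta}}\Bigr)^{\!\mu}
\]
with $\delta = 1$, so that $(1+\delta)\mu = 4\log n$; this gives a per-row bound of $(e/4)^{2\log n} = n^{-c_1}$ with $c_1 > 1$. For the lower tail I would use
\[
\Pr\!\bigl[\,|\mathrm{row}_i| \leq (1-\delta)\mu\,\bigr] \;\leq\; \exp\!\bigl(-\mu\bigl(\delta + (1-\delta)\ln(1-\delta)\bigr)\bigr)
\]
with $\delta = 3/4$, so that $(1-\delta)\mu = \tfrac12\log n$; this yields a per-row bound $n^{-c_2}$ with $c_2 > 1$. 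Summing over the $n$ rows then gives total bad-event probability at most $n(n^{-c_1} + n^{-c_2}) = n^{-\Omega(1)}$, as required.

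The only real thing to watch is quantitative: the constants $4$ and $\tfrac12$ in the lemma must be loose enough that both Chernoff exponents, once multiplied by $\mu = 2\log n$, strictly exceed $\log n$, so that they can absorb the union-bound factor of $n$. The sharp multiplicative exponents above satisfy this; the weaker form $e^{-\delta^{2}\mu/3}$ would not, so some care in the choice of Chernoff inequality is needed. Beyond this constant-chasing I do not expect any technical obstacle.
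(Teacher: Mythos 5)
Your approach is the same as the paper's: the paper offers no proof beyond pointing to the Chernoff bounds in Theorems~4.4 and~4.5 of Mitzenmacher and Upfal, which is exactly what you spell out, and you correctly observe that the weaker $e^{-\delta^2\mu/3}$-type form would not suffice, so the sharper multiplicative forms must be used. One thing you should nail down rather than assert, though: the claim that $c_1,c_2>1$ hinges on the base of the logarithm. Reading $\log=\log_2$ gives $\mu=2\log_2 n$ and your two bounds yield exponents $2\log_2(4/e)\approx 1.11$ and $\tfrac{2}{\ln 2}\bigl(\tfrac34+\tfrac14\ln\tfrac14\bigr)\approx 1.16$, both comfortably above $1$, so the union bound over $n$ rows goes through. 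But with $\log=\ln$ (the usual convention in Cooper's theorem on random matrices) the same computations give roughly $0.77$ and $0.81$, both below $1$, and the union bound fails --- in fact the expected number of ``bad'' rows would then grow like a positive power of~$n$. So the arithmetic you wave at in the last paragraph is not a formality; it is precisely what determines that the constants $4$ and $\tfrac12$ are tight enough, and it also pins down the convention under which the lemma is even true.
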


\subsection{Sampling the binomial distribution}

Using one extra hash function $q$ (with range $[n^3]$, say),
for each $x\in S$ we can choose a number $k(x)$ at random from the 
binomial distribution conditioned on $[\frac12\log n,4\log n]$.
Then Lemma~\ref{lem:not:too:many:ones} implies that
the deviation in probability 
from the situation of Theorem~\ref{thm:cooper} is $o(1)$.
Specifically, assume $n$  and $p=2\log(n)/n$ are given,
and that a fully random hash function $g$ with range $[n^4]$ is available.
We wish to sample from $\mathrm{B}(n,p)$ conditioned on
$[\frac12\log n, 4\log n]=[\frac14np,2np]$, at least approximately.
For this, we prepare a table of all $O(\log n)$ values $F(i)=\Pr(X\le i)$,
$\frac12\log n \le i \le 4\log n$, for the corresponding distribution function $F$. 
For given $g(x)$ we find $i$ with 
$\frac12(F(i-1) + F(i)) \le g(x)/n^4 <  \frac12(F(i) + F_{\rm B}(i+1))$,
and return $i$.
An easy calculation shows that both  
$\mathrm{B}(n,p;\frac14np)$ and $\mathrm{B}(n,p;\frac2np)$
are in $[n^{-1},n^{-2}]$, so that the error we make in comparison
to the true binomial distribution is in $O(1/n^2)$,
which for our purposes is negligible.  

Having fixed $k(x)$, $x\in S$,
using $4 \log n$ further hash functions with ranges $[n-\ell+1]$, $1\le \ell \le 4\log n$,
for each $x\in S$ a set $A_x=\{h_1(x),h_2(x),\ldots,h_{k(x)}(x)\}$
is chosen at random from $\binom{[n]}{k(x)}$.
(For details see Appendix~\ref{sec:creating:sets}.)

\subsection{Putting things together}

According to Theorem~\ref{thm:cooper} the resulting matrix $M$ will 
have full rank with probability $0.28879 \pm o(1)$.
If it turns out $M$ is singular, 
we start all over, with a new set of $1+4\log n$ hash functions. 
With $O(\log n)$  such trials, the probability
that all resulting matrices $M$ are singular 
can be made as small as $n^{-d}$ for an 
arbitrary constant $d$. 
The information which set of hash functions succeeds 
is part of the data structure. Recording this information
takes at most $O(\log\log n)$ bits.
The remaining details of the construction are
the same as in Section~\ref{sec:retrieval:with:full:randomness}.

Lookup works as follows:
Given a key $x\in U$, use $q$ to calculate $k(x)$.
If this happens to be outside the range $[\frac12\log n,4\log n]$,
return an arbitrary value. 
Otherwise calculate $A_x=\{h_1(x),h_2(x),\ldots,$ $h_{k(x)}(x)\}$ 
and return the value given by (\ref{eq:405}), with $k(x)$ in place of $k$.     



This proves Theorem~\ref{thm:main}(a).
We do not try to reduce the cost $O(n^3)$ for 
solving the linear system.
Using the ``split-and-share'' trick described in Appendix~\ref{sec:split:and:share}
we may avoid the assumption of fully random hash functions 
and obtain a retrieval structure as described in 
Theorem~\ref{thm:with:split:and:share}(a).


\section{Approximate membership}\label{sec:bloom}

We prove Theorem~\ref{thm:reduction:approximate:retrieval}. ---
Let $S\subseteq U$ with $|S|=n$ and an error bound $2^{-s}$ be given. 
We let $q\colon U\to [2^s]$ be a fully random hash function. 
Using the methods from Sections \ref{sec:retrieval:with:full:randomness}
resp.~\ref{sec:retrieval:optimal:space} 
we build a retrieval structure $D$
that associates value $q(x)$ with $x\in S$.
The space requirements for the data structures 
and the times for construction and retrieval are 
inherited from the corresponding retrieval structures. 

A query for $x\in U$ returns ``yes'' if $D(x)=q(x)$ and ``no'' otherwise. 
It is then clear that a query for an $x\in S$ always
yields ``yes''. 
A query for an $x\in U-S$ yields ``yes''
with probability $2^{-s}$, since $D(x)$ is given by
the data structure, which is 
determined by $q(y),y\in S$ (and some other random choices), 
and hence is independent of $q(x)$.

\begin{remark}\label{remark:bloomier:filter}
\begin{rm}
Of course, we may combine the retrieval data structure
of Section~\ref{sec:retrieval:with:full:randomness}
with the approximate membership data structure just described to obtain
a data structure that needs space $(1+\delta)n(r+s)$
and has the functionality of a ``Bloomier filter'' 
as described in 
\cite{CKRT04}: On a query for $x\in S$, a prescribed
value $f(x)\in \{0,1\}^r$ is returned, while for $x\in U-S$
the probability that some value from $\{0,1\}^r$ 
(and not some error symbol) is returned is $2^{-s}$. 
\end{rm}
\end{remark}

\begin{remark}\label{remark:approximate:with:split:and:share:one}
\begin{rm}
If we drop the assumption of fully random hash functions (like $q$)
being provided for free, only a $o(1)$ term
has to be added to the false positive probability. This is brief\/ly discussed in 
Remark~\ref{remark:approximate:with:split:and:share:two}
in Appendix~\ref{sec:split:and:share}.
\end{rm}
\end{remark}

\section{Retrieval in almost optimal space, with linear construction time}
\label{sec:linear-retrieval}

In this section we show how, using a variant of the retrieval 
data structure described in Section~\ref{sec:retrieval:with:full:randomness}, 
we can achieve linear expected construction time
and still get arbitrarily close to optimal space.
This will prove Theorem~\ref{thm:main}(b).
The reader should be aware that 
the results in this section hold asymptotically, 
only for rather large $n$.

Using the notation of Sections \ref{sec:Calkin}
and~\ref{sec:retrieval:with:full:randomness},
we fix some $k$ and some $\delta>0$ such that 
$(1+\delta)\beta_k>1$. 
Further, some constant $\varepsilon>0$ is fixed.
We assume that
$2k+1$ fully random hash functions
are at our disposal, with ranges we can choose, 
and in case the construction fails we can choose 
a new set of such functions, even repeatedly.  
(Appendix~\ref{sec:split:and:share} explains
how this can be justified.) 

Define $b=\frac12\sqrt{\log n}$.
We assume that $\varepsilon$ and $\delta$ are 
so small that $(1+\varepsilon)^2(1+\delta)<4$,
and hence that $b \cdot 2^{(1+\varepsilon)^2(1+\delta)b^2}=o(n/(\log n)^3)$.

Assume $f\colon S\to\{0,1\}^r$ is given,
the value $f(x)$ being denoted by $u_x$. 
The global setup is as follows: 
We use one fully random hash function $\varphi$ to map $S$ into the range $[m_0]$ with $m_0=n/b$.
In this way, $m_0$ blocks $B_i=\{x\in S\mid \varphi(x)=i\}$, $0\le i < m_0$, are created, 
each with expected size $b$.
The construction has two parts:
a primary structure and 
a secondary structure for the ``overflow keys''
that cannot be accommodated in the primary structure. 
This is similar to the global structure of a number of well-known dictionary implementations.
For the primary structure, we
try to apply the construction from Section~\ref{sec:retrieval:with:full:randomness}
to each of the blocks separately,
but only once, with a fixed set of $k$ hash functions.
This construction 
may fail for one of two reasons: (i) the block may be too large ---
we do not allow more than $b'=(1+\varepsilon)b$ keys in a block if it
is to be treated in the primary structure, or
(ii) the construction from  Section~\ref{sec:retrieval:with:full:randomness}
fails because the row vectors in the matrix $M_i$ induced by the sets $A_x$, $x\in B_i$,  
are not linearly independent. 
 
For the primary structure, 
we set up a table $T$ with $(1+\delta)(1+\varepsilon)n$
entries, partitioned into $m_0$ segments of size
$(1+\delta)(1+\varepsilon)b=(1+\delta)b'$.
Segment number $i$ is associated with block $B_i$.
If the construction from Section~\ref{sec:retrieval:with:full:randomness}
fails, we set %
all the bits in segment number $i$ to $0$ and use the secondary structure to associate keys in $B_i$ with the correct values. As secondary structure we choose a retrieval structure as in \cite{CKRT04,p:MWHC96},
built on the basis of a second set of, say, $3$ hash functions
(used to associate sets $A'_x\subseteq[1.3n']$ with 
the keys $x\in S'$) 
and a table $T'[0..1.3n'-1]$.
This uses space $1.3n'r$ bits, where $n'$ is the size of the set $S'$ of keys for which the construction failed (the ``\emph{overflow keys}'').
Of course, the secondary structure associates a value $f'(x)$ with {\em any\/} key $x\in S$. Rather than storing information about which blocks succeed we compensate for the contribution from $f'(x)$ as follows: If the construction 
succeeds for $B_i$,
we store $(1+\delta)b'$
vectors of length $r$ in segment number $i$ of table $T$ so that $x\in B_i$ is associated with the value $f(x)\oplus f'(x)$.
On a query for $x\in U$, calculate $i=\varphi(x)$, then the offset
$d_i=(i-1)(1+\delta)b'$ of the segment for block $B_i$ in $T'$,  and return 
$$
\textstyle\bigoplus\limits_{j\in A_x} T[j+d_i] \;\oplus\; 
\textstyle\bigoplus\limits_{j\in A'_x}T'[j],
$$
$\oplus$ representing bitwise XOR in $\{0,1\}^r$.
It is clear that for $x\in S$ the result will be $f(x)$: For $x\in S'$ the two terms are {\bf 0} and $f(x)$, and for $x\not\in S'$ the two terms are $f'(x)$ and $f(x)\oplus f'(x)$.
Note that the accesses to the tables are
nonadaptive: all $k+3$ lookups may be carried out in parallel. In fact, if $T$ and $T'$ are concatenated this can be seen as the same evaluation procedure as in our basic algorithm~(\ref{eq:405}), the difference being that the hash functions were chosen in a different way (e.g., do not all have the same range).

Lemma~\ref{lem:sublinear:S:prime} below says that $\E(n')=o(n)$.
Before proving the lemma we conclude the space analysis assuming that it is true.
The overall space is 
$(1+\delta)(1+\varepsilon)n (r + 1/b) + c|S'|r \mbox{ bits}$
(apart from lower order terms). 
If $\gamma>0$ is given, we may choose $\varepsilon$ and $\delta$ 
(and $k$)
so that this bound is smaller than $(1+\gamma)nr$ for $n$ large enough.  

We proceed to show that that $|S'|=o(n)$ and how to achieve construction time $O(n)$.

\begin{lemma}\label{lem:sublinear:S:prime}
The expected number of overflow keys is $o(n)$.
\end{lemma}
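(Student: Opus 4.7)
The plan is to decompose the overflow set as $S' = S'_1 \cup S'_2$, where $S'_1$ collects the keys lying in oversized blocks (those with $|B_i|>b'$) and $S'_2$ the keys whose block has $|B_i|\le b'$ but for which the associated $|B_i|\times m$ matrix $M_i$ fails to have full row rank. I will bound $\E[|S'_1|]$ and $\E[|S'_2|]$ separately.

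Since $\varphi$ is fully random, $|B_i|\sim\mathrm{Bin}(n,1/m_0)$ with mean~$b$. A standard Chernoff bound gives $\Pr(|B_i|\ge(1+\varepsilon)b)\le\exp(-\Omega(\varepsilon^2 b))=\exp(-\Omega(\sqrt{\log n}))$, and the same estimate (applied to the tail sum $\sum_{t>b'} t\Pr(|B_i|=t)$) controls the conditional expected size of an oversized block, so $\E[|B_i|\cdot\mathbbm{1}[|B_i|>b']]\le O(b)\cdot\exp(-\Omega(\sqrt{\log n}))$. Summing over the $m_0=n/b$ blocks yields
$$
\E[|S'_1|] \;\le\; n\cdot\exp(-\Omega(\sqrt{\log n}))\;=\;o(n),
$$
which handles the size-overflow contribution.

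For $\E[|S'_2|]$ I would condition on $|B_i|=t\le b'$. Then $M_i$ has $t$ independent rows, each the indicator of a uniformly random $k$-subset of $[m]$ with $m=(1+\delta)b'$. Since $t/m\le 1/(1+\delta)<\beta_k$, we are in the subcritical regime of Theorem~\ref{thm:Calkin}. To get a quantitative failure probability I would use the union bound
$$
\Pr(M_i\text{ singular})\;\le\;\sum_{j=1}^{t}\binom{t}{j}\,P_j(m,k),
$$
where $P_j(m,k)$ is the probability that $j$ random weight-$k$ vectors of length~$m$ sum to~$0$ over GF$(2)$ --- the very quantity estimated in Calkin's proof. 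Combining the crude bound $\binom{t}{j}\le 2^{b'}$ with Calkin-style estimates on $P_j(m,k)$, the hypothesis $b\cdot 2^{(1+\varepsilon)^2(1+\delta)b^2}=o(n/(\log n)^3)$ forces $\Pr(M_i\text{ singular})$ to be so small that, multiplying by $m_0$ blocks and by at most $b'$ keys per failed block, one obtains $\E[|S'_2|]=o(n)$.

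The main obstacle will be the second step. Theorem~\ref{thm:Calkin} is only an asymptotic statement as the number of rows tends to infinity, while here we must apply it to matrices with both dimensions only $\Theta(\sqrt{\log n})$. I therefore expect to have to reach inside the proof of Theorem~\ref{thm:Calkin}, tracking the exponential decay rate of $\sum_j\binom{t}{j}P_j(m,k)$ as an explicit function of $m$, rather than invoking the theorem as a black box. The purpose of the standing assumption $(1+\varepsilon)^2(1+\delta)<4$ becomes clear here: the exponent $b'\cdot m=(1+\varepsilon)^2(1+\delta)b^2$ is exactly the ``budget'' of the crude union bound over $t\times m$ submatrices, and the condition forces this budget below $\log n$, so that the per-block failure probability decays faster than $b/n$ and the expected failed-key count remains $o(n)$.
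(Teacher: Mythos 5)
Your overall plan---split $S'$ into keys in oversized blocks ($S'_1$) and keys in blocks where the matrix $M_i$ is singular ($S'_2$)---is the same as the paper's, and your treatment of $S'_1$ via a Chernoff bound, while phrased per block rather than per key, is fine.

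The $S'_2$ part, however, is on the wrong track, and both premises motivating the detour are false. First, you do not need a per-block failure probability anywhere near as small as $b/n$. Each failed block contributes at most $b'=(1+\varepsilon)b$ keys, and there are $m_0=n/b$ blocks, so $\E[|S'_2|]\le m_0\cdot b'\cdot \Pr(\text{failure})=(1+\varepsilon)\,n\cdot\Pr(\text{failure})$; a per-block failure probability of $o(1)$ already gives $\E[|S'_2|]=o(n)$. Second, Calkin's result \emph{can} be applied to these $\Theta(\sqrt{\log n})\times\Theta(\sqrt{\log n})$ matrices: since $b=\tfrac12\sqrt{\log n}=\omega(1)$, the number of rows does tend to infinity as $n\to\infty$, and the paper has already prepared the required quantitative form in Remark~\ref{remark:convergence:speed}, which yields $\Pr(M_i\text{ singular})=O(b^{-\varepsilon_k})=1/b^{\Omega(1)}=o(1)$ for a block of size at most $b'$ and $(1+\delta)b'$ columns. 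That is all the argument needs; no union bound over subsets of rows, no re-derivation of Calkin's exponential rates, and no invocation of the condition $(1+\varepsilon)^2(1+\delta)<4$. That last condition is in fact irrelevant to this lemma: it exists only so that the precomputed lookup tables of Section~\ref{subsec:preliminaries:tables}, indexed by all $b'\times(1+\delta)b'$ binary matrices, occupy $o(n)$ space. It is not a ``union-bound budget.''

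So the missing idea is not an analytic strengthening of Theorem~\ref{thm:Calkin}, but the simpler observation that $o(1)$ per-block failure plus the bound of $b'$ keys per block already gives $o(n)$, combined with the fact that Remark~\ref{remark:convergence:speed} applies because $b\to\infty$. Once you see this, your route collapses back onto the paper's two-line argument.
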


\emph{Proof}:
It is sufficient to show 
that the expected number of blocks that have
more than $(1+\varepsilon)b$ keys or
for which the
construction from Section~\ref{sec:retrieval:with:full:randomness}
fails is $o(m_0)$.
Let $x_0\in S$ be an arbitrary key. 
The number of keys colliding with $x_0$ is
$\mathrm{B}(n-1,\frac1{m_0})$-distributed;
we may assume it is $\mathrm{B}(n,\frac1{m_0})$-distributed.
The expected number of keys that collide with $x_0$ under $\varphi$ then is $n/m_0\le b $. 
Since $b'=(1+\varepsilon)b$, a standard Chernoff bound (\cite[Thm.~4.4]{MU:05}) 
together with the fact that $b=\omega(1)$ yields that 
$
\Pr(x\mbox{ collides with more than $b'$ keys}) \le e^{-b\varepsilon^2/3}=o(1)
$.
Hence the expected number of keys in overfull blocks is $o(n)$.
Now assume a block $B_i$ has no more than $(1+\varepsilon)b$
keys, and the
construction from Section~\ref{sec:retrieval:with:full:randomness}
is applied (once).
There we noted, referring to Remark~\ref{remark:convergence:speed},
that the probability that the
$b'\times(1+\delta)b'$-matrix induced by the sets $A_x$, $x\in B_i$
will have linearly dependent rows
is $1/b^{\Omega(1)}$, 
which is $o(1)$ again. 
Hence the expected number of keys in blocks that create matrices
that do not have full rank is $o(n)$.  
$\Box$

\subsection{Matrix computations using tables}
                                                      \label{subsec:preliminaries:tables}

As a building block in our construction algorithm for the primary structure we need some tables that allow us to do computations on small matrices efficiently. The details offer no surprises, but are given here for completeness.

Let $\varepsilon, \delta>0$ be such that $(1+\varepsilon)^2(1+\delta)<4$.
For given $n$ define $b=\frac12\sqrt{\log n}$,
and let $b'=(1+\delta)b$.
  We want to set up auxiliary tables that help in dealing
  with binary matrices $M$ with ${}\le b'$ rows and $(1+\delta)b'$ columns.  
There are no more than $b'\cdot 2^{(1+\varepsilon)^2(1+\delta)b^2}=o(n/(\log n)^3)$ 
such matrices.
For each such matrix $M$ we determinine and store whether its rows are independent; 
if this is the case we also calculate and store a pseudoinverse $C$, as described
in Section~\ref{sec:retrieval:with:full:randomness}.
The overall space needed for this table is $o(n)$ bits, and
the total time to calculate the entries is $o(n)$ steps, 
even if a simple method like Gaussian elimination is used. 

In Section~\ref{sec:retrieval:with:full:randomness}
we described what we mean by multiplying 
a matrix with a vector of bit strings.
For some integer constant $\ell\ge 3$
we prepare a table of all matrix-vector pairs $(L,v)$ and their product $L\cdot v$,
where the 0-1-matrix $L$ has ${}\le b'/\ell$ rows and $b'(1+\varepsilon)/\ell$
columns, and the vector $v$ has $b'(1+\varepsilon)/\ell$
entries that are bit strings of length $(\log n)/\ell$.
Such a table makes it possible to 
multiply a matrix with  ${}\le b'$ rows and $b'(1+\varepsilon)$ columns with a
vector whose entries are bit strings of length $O(\log n)$
in $O(\ell^3)$ word operations, hence in $O(1)$ time.
This table has size $o(n)$ bits as well. 


\subsection{Primary structure construction algorithm}

We are now ready to show the following lemma.
\begin{lemma}\label{lem:linear:time:primary}
The primary structure can be constructed in time $O(n)$.
\end{lemma}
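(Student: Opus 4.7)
The plan is to handle each block $B_i$ in time $O(|B_i|+b')$, exploiting the auxiliary tables from Section~\ref{subsec:preliminaries:tables} to reduce all matrix work per block to $O(1)$ word operations. The parameter choices ($b=\frac12\sqrt{\log n}$, $(1+\varepsilon)^2(1+\delta)<4$) are exactly what is needed to ensure every per-block matrix $M_i$, its pseudoinverse $C_i$, and the target vector $u_i$ each fit into $O(1)$ machine words, so that constant-time table lookup is possible.

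The algorithm proceeds in four stages. First, compute $\varphi(x)$ for all $x\in S$ and bucket the keys, and for each $x$ compute the set $A_x\subseteq[(1+\delta)b']$ using the $k$ hash functions; since $k$ is constant this costs $O(n)$. Second, for each block $B_i$: if $|B_i|>b'$, mark the block as overflow and continue; otherwise, pack the rows of $M_i$ into one machine word (its total length $b'\cdot(1+\delta)b'<\log n$ bits by the choice of parameters) and use the precomputed table to obtain, in $O(1)$ time, either the verdict ``singular'' (mark overflow) or the pseudoinverse $C_i$ stored as a constant number of words. The per-block cost here is $O(|B_i|)$ for reading keys and packing, summing to $O(n)$. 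Third, form the overflow set $S'$ and build the secondary structure on it using the MWHC-style algorithm of~\cite{p:MWHC96,CKRT04}, which runs in expected linear time in $|S'|$; by Lemma~\ref{lem:sublinear:S:prime} this contributes $o(n)$ in expectation. Fourth, for each non-overflow block $B_i$ query the secondary structure to obtain $f'(x)$ for each $x\in B_i$ (cost $O(|B_i|)$), assemble the target vector $u_i=(f(x)\oplus f'(x))_{x\in B_i}$ (an $O(\log n)$-bit object since $r=O(\log n)$), compute $u_i'=C_i\cdot u_i$ in $O(\ell^3)=O(1)$ time using the multiplication table (splitting $C_i$ into $\ell^2$ sub-blocks and $u_i$ into $\ell$ pieces of bit length $(\log n)/\ell$), read off the special solution $a^{(i)}$ via~(\ref{eq:retrieval:10}), and write it to segment $i$ of $T$, in $O(b')$ time. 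Overflow segments are simply zeroed out, also in $O(b')$ time per block.

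For the total runtime, the dominating contributions are $O(n)$ for the hashing/bucketing and the per-key work in stages~1 and~4, $O(m_0\cdot b')=O((n/b)\cdot b')=O(n)$ for writing all segments, and $O(m_0)=O(n/\sqrt{\log n})$ for the constant-time per-block matrix operations; the secondary structure adds an expected $o(n)$ term. Altogether the expected construction time of the primary structure is $O(n)$.

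The main obstacle is verifying that the constant-time matrix operations are legitimate on the word RAM. The crux is the parameter budget: $b'\cdot(1+\delta)b'=(1+\varepsilon)^2(1+\delta)b^2<\log n$, so an entire matrix $M_i$ fits in one word; the rank/pseudoinverse table is indexed by such words and has size $o(n/(\log n)^3)\cdot\mathrm{poly}(\log n)=o(n)$ bits; and the chunked multiplication table handles any $r=O(\log n)$ in $O(1)$ lookups per block. Once these ingredients are in place, the time analysis is a straightforward summation.
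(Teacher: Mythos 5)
Your proposal is correct and follows essentially the same strategy as the paper's proof: split the work per block, pack each block matrix $M_i$ into a single machine word (using the budget $(1+\varepsilon)^2(1+\delta)b^2<\log n$), and use the precomputed tables from Section~\ref{subsec:preliminaries:tables} to do the rank check, pseudoinverse retrieval, and chunked matrix-vector multiplication in $O(1)$ word operations per block. You are more explicit than the paper about the order of operations (building the secondary structure before filling in the primary segments, since $f'(x)$ is needed to form the targets), which is a worthwhile addition; the only small slip is the parenthetical claim that $u_i=(f(x)\oplus f'(x))_{x\in B_i}$ is an $O(\log n)$-bit object --- it actually occupies up to $b'\cdot r=O((\log n)^{3/2})$ bits, i.e.\ several words --- but this does not affect the argument because the $\ell^3$-way chunked multiplication you invoke already handles vectors of that size in constant time.
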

\emph{Proof}:
It is clear that linear time is sufficient to find the blocks $B_i$ 
and identify the blocks that are too large. 
Now consider a fixed block $B_i$ of size at most $(1+\varepsilon)b$.
We must evaluate $|B_i|\cdot k$ hash functions
to find the sets $A_x$, $x\in B_i$,
and can piece together the matrix $M_i$ that is induced by these
sets in time $O(b)$ (assuming one can 
establish a word of $O(b)$ $0$s in constant time
and set a bit in such a word given by its position 
in constant time).  
The whole matrix has fewer than $\log n$ bits and
fits into a single word. 
This makes it possible to use the precomputed tables described in Section~\ref{subsec:preliminaries:tables}.
Similarly, using precomputed tables 
we can check in constant time whether 
$M_i$ has linearly independent rows or not
and in the positive case find a pseudoinverse $C_i$.

Now assume a bit vector $u=(u_1,\ldots,u_{|B_i|})^{{\sf T}}\in\{0,1\}^{|B_i|}$, is given.
Using $C_i$ and a lookup table as in Section~\ref{subsec:preliminaries:tables} 
we can find $C_i\cdot u$ 
in constant time. 
A bit vector $a=(a_j)_{1\le j \le (1+\delta)b'}$ that solves $M_i\cdot a=u$ can then
be found in time $O(b)$.
This leads to an overall construction time of $O(n)$
for the whole primary structure. 

If the values in the range are bit vectors $f(x)=u_x\in \{0,1\}^r$, $x\in B_i$,
a construction in time $O(nr)$ follows trivially. 
We may improve this time bound as follows:
The lookup tables from  Section~\ref{subsec:preliminaries:tables} 
make it possible to multiply $C_i$ 
even with vectors of length up to $O(\log n)$ in constant time. 
This establishes a construction time of $O(n)$
for the general problem of representing a function
$f\colon S\to\{0,1\}^r$, using $r=O(\log n)$.
$\Box$

We have proved the following result, which is a precise 
and more general version
of Theorem~\ref{thm:main}(b):

\begin{theorem}
\label{thm:linear:time}
There is an algorithm ${\cal A}$ with the following properties.
For every $\gamma>0$ there is some $k=O(\log(1/\gamma))$ such that for 
all sufficiently large $n$ the following holds\emph{:}
Given a set $S\subseteq U$ with $n=|S|$ and a function $f\colon S\to \{0,1\}^r$,
with probability $1-o(1)$ algorithm ${\cal A}$ will
succeed in building a data structure $D$ such that\emph{:}
\begin{itemize}\setlength{\itemsep}{0pt}
	\item[{\rm(a)}] $D$ supports retrieval 
	in time $O(k)$, with no more than $2k+1$ hash function evaluations and $2k+1$
	\emph{(}nonadaptive\emph{)} 
	random accesses into tables storing $r$-bit vectors or bits\emph{;}
	\item[{\rm(b)}] the space occupied by $D$ is no more than $(1+\gamma)nr$ bits\emph{;}
	\item[{\rm(c)}] ${\cal A}$ runs in time $O(n)$.
\end{itemize}
\end{theorem}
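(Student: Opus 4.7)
The plan is to synthesize the construction described in Section~\ref{sec:linear-retrieval} into the claimed algorithm $\mathcal{A}$ and verify the three properties (a)--(c) against a suitable parameter choice. First, given $\gamma>0$, I invoke Calkin's expansion $\beta_k = 1 - e^{-k}/\ln 2 + O(k^2 e^{-2k})$ from~(\ref{eq:417}), which yields $\beta_k^{-1} = 1 + \Theta(e^{-k})$. Thus $k = \Theta(\log(1/\gamma))$ suffices to pick positive constants $\delta$ and $\varepsilon$ simultaneously satisfying
\[
(1+\delta)\beta_k > 1, \quad (1+\delta)(1+\varepsilon) \le 1 + \gamma/2, \quad (1+\varepsilon)^2(1+\delta) < 4,
\]
the last inequality being exactly what makes the auxiliary tables of Section~\ref{subsec:preliminaries:tables} fit in $o(n)$ bits. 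The $2k+1$ fully random hash functions will be used as one partitioner $\varphi\colon U \to [n/b]$ inducing the blocks $B_i$, $k$ functions determining the sets $A_x$ inside each block, and $k$ functions for the sets $A'_x$ of a Chazelle-\emph{et al.}~retrieval structure \cite{CKRT04,p:MWHC96} on the overflow set $S'$.

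For the construction time, the plan is to run $\mathcal{A}$ in three passes. The first pass computes $\varphi$, identifies overfull blocks, and for every block of size at most $(1+\varepsilon)b$ runs the constant-time per-block feasibility test from Section~\ref{subsec:preliminaries:tables}; the keys of overfull or infeasible blocks form $S'$. Lemma~\ref{lem:sublinear:S:prime} gives $\mathbf{E}|S'|=o(n)$, so by Markov $|S'|=o(n)$ with probability $1-o(1)$. Conditioned on this, the second pass builds the secondary MWHC-style retrieval structure on $S'$ with values $f(x)$, which succeeds in expected time $O(|S'|) = o(n)$ with at most a constant expected number of retries against fresh hash functions. The third pass revisits the successful blocks, writes $f(x)\oplus f'(x)$ into each segment of $T$ by Lemma~\ref{lem:linear:time:primary}, and zeroes the segments of failed blocks; this again takes time $O(n)$, yielding (c).

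For (a) and (b), the primary table occupies $m_0 \cdot (1+\delta)b' \cdot r = (1+\delta)(1+\varepsilon)nr$ bits, the secondary contributes $O(|S'|\,r) = o(nr)$ bits, and the precomputed tables plus hash-function identifiers contribute $o(n)$ bits; the choice $(1+\delta)(1+\varepsilon) \le 1+\gamma/2$ gives the total $\le (1+\gamma)nr$ for $n$ large. Correctness of the query $\bigoplus_{j\in A_x} T[j+d_i] \oplus \bigoplus_{j\in A'_x} T'[j]$ is the bookkeeping identity already argued in the text: for $x \in S \setminus S'$ the two sums are $f(x)\oplus f'(x)$ and $f'(x)$, XORing to $f(x)$; for $x\in S'$ the primary sum is $\mathbf{0}$ (segment zeroed) and the secondary returns $f(x)$. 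Each query performs at most $2k+1$ nonadaptive hash evaluations and accesses.

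The main obstacle is reconciling the two independent failure modes. The per-block linear-system failure of the primary is absorbed into $S'$ via Lemma~\ref{lem:sublinear:S:prime}, but the MWHC construction on $S'$ has its own failure probability that must also be driven to $o(1)$. I rely on the fact that MWHC succeeds with probability bounded away from zero (with the $1.3|S'|$-column setting quoted in Section~\ref{sec:linear-retrieval}) and retry the secondary construction with fresh hash functions a constant expected number of times; the total ``$\mathcal{A}$ fails'' event is then the union of $\{|S'|$ too large$\}$ (probability $o(1)$ by Markov) and $\{$all secondary retries fail$\}$ (probability $o(1)$ by standard MWHC analysis), which is absorbed in the $1-o(1)$ success bound required by the theorem.
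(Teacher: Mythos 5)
Your proposal follows the paper's own proof of Theorem~\ref{thm:linear:time} essentially step for step: the block decomposition with $b=\tfrac12\sqrt{\log n}$, the primary/secondary split with Lemma~\ref{lem:sublinear:S:prime} controlling overflow, the XOR compensation $f(x)\oplus f'(x)$, the table-based constant-time per-block solve of Lemma~\ref{lem:linear:time:primary}, and the parameter calibration via Calkin's expansion (\ref{eq:417}). One small inaccuracy worth noting: you allot $k$ hash functions to the secondary structure while invoking the ``$1.3|S'|$-column'' setting, but that constant is tied to using \emph{three} secondary hash functions (as the paper does); with $k>3$ secondary functions the MWHC acyclicity threshold is larger than $1.3$, so you should either fix the secondary at $3$ functions (giving $k+4\le 2k+1$ total, matching the paper) or enlarge the secondary table to the appropriate $c_k|S'|$, which still keeps it at $o(nr)$ bits. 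Either fix is cosmetic and does not affect the correctness of the overall argument.
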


\section{Retrieval and dictionaries by balanced allocation}
                                                                           \label{sec:hashing:retrieval}

In several recent papers,
the following scenario for (statically) storing a set 
$S\subseteq U$ of keys was studied.
A set $S=\{x_1,\ldots,x_n\}\subseteq U$
is to be stored in a table $\texttt{T[}0..m-1\texttt{]}$ 
of size $m=(1+\delta)n$ as follows:
To each key $x$ we associate a set $A_x\subseteq[m]$ 
of $k$ possible table positions.
Assume there is a mapping $\sigma\colon \{1,\ldots,n\}\to[m]$
that is one-to-one and satisfies $\sigma(i)\in A_{x_i}$, for $1\le i \le n$.
(In this case we say $(A_x,x\in S)$ is \emph{suitable} for $S$.)
Choose one such mapping and store $x_i$ in $\texttt{T[}\sigma(i)\texttt{]}$.
Examples of constructions that follow this scheme are 
cuckoo hashing~\cite{cuckoo-jour}, $k$-ary cuckoo hashing~\cite{FPSS05},
blocked cuckoo hashing~\cite{DieWei07,panigrahy05},
and perfectly balanced allocation~\cite{CRS03}.
In \cite{Cain07,fern07} threshold densities for blocked cuckoo
hashing were determined exactly.
These schemes are the most space-efficient dictionary
structures known, among schemes that store the keys explicitly in a hash table. 
For example, $k$-ary cuckoo hashing~\cite{FPSS05}
works in space $m=(1+\varepsilon_k)n$ with $\varepsilon_k=e^{-\Theta(k)}$.
Perfectly balanced allocation~\cite{CRS03}
works in optimal space $m=n$ with $A_x$ 
consisting of 2 contiguous segments of $[n]$ of length $O(\log n)$ each\footnote{The result of~\cite{CRS03} only directly applies when $n$ is divisible by the segment length. However, from the proof it is clear that one also gets a perfectly balanced allocation in the case where segment sizes may differ by~$1$.}.

Here, we point out a close relationship between dictionary structures of this kind
and retrieval structures for functions $f\colon S\to R$, whenever the range $R$ is not too small. 
We will assume that $R=\mathbb{F}$ for a finite field $\mathbb{F}$ with $|\mathbb{F}|\ge n$.
(Using a simple splitting trick, 
this condition can be attenuated to  $|\mathbb{F}|\ge n^\delta$, see Appendix~\ref{sec:split:and:share}.)

From Section~\ref{sec:retrieval:with:full:randomness}
we recall equation~(\ref{eq:500}) where the matrix 
$M=(p_{ij})_{%
1\le i \le n\\ 0\le j <m
}$ 
was defined from the 
sets $A_x$, $x\in S$.

\begin{observation}
Let $\mathbb{F}$ be an arbitrary field. If the $1$s in $M$ can be replaced by elements of $\mathbb{F}$
in such a way that the resulting matrix $M'=(p_{ij}')$ has full row rank over $\mathbb{F}$,
then $(A_x,x\in S)$ is suitable for $S$. 
\end{observation}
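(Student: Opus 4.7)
The plan is to go from linear algebra to a combinatorial matching via a determinant expansion. Recall that $M$ has a $0$ in position $(i,j)$ precisely when $j\notin A_{x_i}$, and the construction of $M'$ preserves these $0$ entries (only the $1$s are replaced by field elements). So $p'_{ij}\neq 0$ implies $j\in A_{x_i}$.

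First I would extract $n$ linearly independent columns of $M'$. Since $M'$ has full row rank $n$, the column space has dimension $n$, so we can choose indices $J=\{j_1,\ldots,j_n\}\subseteq[m]$ such that the $n\times n$ submatrix $M'_J=(p'_{i,j_t})_{1\le i,t\le n}$ is nonsingular, i.e.\ $\det(M'_J)\neq 0$. Expanding by the Leibniz formula,
\[
\det(M'_J)=\sum_{\pi\in S_n}\operatorname{sgn}(\pi)\prod_{i=1}^n p'_{i,j_{\pi(i)}},
\]
so at least one summand is nonzero. Fix a permutation $\pi$ witnessing this, i.e.\ $p'_{i,j_{\pi(i)}}\neq 0$ for every $i=1,\ldots,n$. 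By the observation at the start of this sketch, this forces $j_{\pi(i)}\in A_{x_i}$ for each $i$.

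Finally, define $\sigma\colon\{1,\ldots,n\}\to[m]$ by $\sigma(i)=j_{\pi(i)}$. Because $j_1,\ldots,j_n$ are distinct and $\pi$ is a bijection, $\sigma$ is injective; and by the previous step $\sigma(i)\in A_{x_i}$. Hence $(A_x,x\in S)$ is suitable for $S$, as required. There is really no serious obstacle here: the only non-trivial ingredient is the standard trick of using the nonvanishing of a single Leibniz term to produce a system of distinct representatives, which is essentially the matroid-intersection/Edmonds-matrix argument at the level of a single $n\times n$ matrix.
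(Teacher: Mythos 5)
Your proof is correct and takes essentially the same route as the paper: select $n$ columns forming a nonsingular $n\times n$ submatrix, expand the determinant by the Leibniz formula, and read off an injective $\sigma$ from a nonvanishing term. The paper states this more tersely but the idea is identical.
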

\begin{proof}
If $M'$ has full row rank, it has an $n\times n$ submatrix $N$ with nonzero determinant.
By the definition of the determinant there must be a mapping $\sigma\colon \{1,\ldots,n\}\to[m]$
with $\prod p'_{i\sigma(i)}\not=0$, hence $ p_{i\sigma(i)}=1$ for $1\le i \le n$.
\end{proof}

The observation implies that Calkin's bounds from Section~\ref{sec:retrieval} give upper space bounds for dictionary constructions like $k$-ary cuckoo hashing. We note that the figures from Table~\ref{tab:ThresholdValues} coincide nicely with the space bounds found by experiments in~\cite{FPSS05}, and are much tighter than the theoretical results found there.
Surprisingly, for fields that are not too small, the observation also works the other way around:
existence of a dictionary implies existence of a retrieval structure.    

\begin{theorem}
\label{thm:hashing:implies:retrieval}
   Assume a mapping $x\mapsto A_x$ is given that is suitable for $S$. 
   Then the following holds\emph{:}
   If $g_1,\ldots,g_k \colon S\to \mathbb{F}$ are random, 
	              then with  probability at least $1 - \frac{n}{|\mathbb{F}|}$
	               the matrix
\begin{equation}\label{eq:5000}
  M'=(p_{ij}')_{%
1\le i \le n\\ 0\le j <m
},\mbox{ with $p_{ij}'= g_\ell(x_i)$ 
  if $j=h_\ell(x_i)$ and $p_{ij}'= 0$ otherwise}
\end{equation}
has full row rank over $\mathbb{F}$.
\end{theorem}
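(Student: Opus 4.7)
My plan is to reduce the full-row-rank question to a non-vanishing determinant, and then combine the suitability of $\sigma$ with the Schwartz--Zippel lemma. Since there exists an injection $\sigma\colon\{1,\dots,n\}\to[m]$ with $\sigma(i)\in A_{x_i}$, it suffices to prove that the $n\times n$ submatrix $N$ of $M'$ formed by columns $\sigma(1),\dots,\sigma(n)$ has nonzero determinant with probability at least $1-n/|\mathbb{F}|$. The row space of $M'$ then contains $n$ linearly independent vectors.

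I will view $\det(N)$ as a polynomial $P$ in the independent, uniformly random $\mathbb{F}$-valued ``variables'' $\{g_\ell(x_i)\}_{i,\ell}$. Each entry of $N$ in row $i$ is either $0$ or a single variable $g_\ell(x_i)$, so each term in the Leibniz expansion is a product of $n$ variables (one per row), and thus $P$ has total degree at most $n$. Once $P$ is shown to be a nonzero polynomial in these variables, the Schwartz--Zippel lemma yields $\Pr[P = 0]\le n/|\mathbb{F}|$.

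The main step, and the one I expect to require a little care, is to exhibit a monomial of $P$ that appears with a nonzero coefficient (and does not cancel with contributions from other permutations). The natural candidate is the diagonal term: for each $i$, pick the index $\ell_i$ with $h_{\ell_i}(x_i)=\sigma(i)$, so that $N_{i,i} = g_{\ell_i}(x_i)$; the identity permutation then contributes $\prod_{i=1}^n g_{\ell_i}(x_i)$. To see this monomial survives, suppose another permutation $\pi$ contributes a monomial of the same form. For each row $i$, $\pi$ selects some column $\pi(i)\in\{1,\dots,n\}$ with $N_{i,\pi(i)}\ne 0$, meaning $\sigma(\pi(i)) = h_{\ell}(x_i)$ for some $\ell$; the corresponding variable is $g_\ell(x_i)$. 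Because variables from different rows are distinct, matching the target monomial forces $\ell=\ell_i$ in row $i$, hence $\sigma(\pi(i)) = h_{\ell_i}(x_i) = \sigma(i)$; injectivity of $\sigma$ then gives $\pi(i)=i$, so only the identity permutation contributes this monomial. (Under the usual convention that the $h_\ell(x_i)$ are distinct in $\ell$, there is also a unique $\ell_i$, so no ambiguity arises; otherwise one can pick any valid $\ell_i$ and the argument is unchanged.) Thus $P$ is a nonzero polynomial of degree at most $n$, and Schwartz--Zippel finishes the proof.
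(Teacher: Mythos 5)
Your proposal is correct and follows essentially the same route as the paper's proof: both restrict to the $n\times n$ submatrix picked out by $\sigma$, view its determinant as a multilinear polynomial in the random entries, observe that the $\sigma$-term is a non-cancelling monomial, and apply Schwartz--Zippel to bound the probability of vanishing by $n/|\mathbb{F}|$. The only stylistic difference is that the paper introduces formal indeterminates $X_{ij}$ before substitution while you work directly with the $g_\ell(x_i)$ as variables, and you spell out the non-cancellation argument (forcing $\pi=\mathrm{id}$ via injectivity of $\sigma$) in more detail than the paper's terse assertion.
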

\begin{proof}
Define an auxiliary matrix  $P=(s_{ij})$ of 
polynomials in variables $X_{ij}$, $1\le i\le n$,
by  
$s_{ij}=X_{ij}$ if $j=h_\ell(x_i)$
 and  $s_{ij}=0$ for all other $j$'s.
Then the submatrix $P'$ of $P$ that corresponds to
the columns $\sigma(1),\ldots,\sigma(n)$ of $P$ satisfies
$$
\det(P')\not= \mbox{the $0$-polynomial.}
$$
This is because the mapping $i\mapsto \sigma(i)$
yields one term in the expansion of $\det(P')$ as a sum that does not vanish, 
and because the terms in this sum  
cannot cancel in any way. 
By the Schwartz-Zippel Theorem (see e.g.~\cite{MotRag95})
we know that
if we substitute random elements $g_\ell(x_i)$ from $\mathbb{F}$
for the variables $X_{ij}$ in $P'$, the probability
that the resulting matrix  $P'[X_{ij}|g_\ell(x_i)]$ (with $j=h_\ell(x_i)$) 
is regular is
at least 
$$
1-\deg(\det(P'))/|\mathbb{F}|\ge 1-n/|\mathbb{F}| \enspace .
$$
Clearly, if $P'[X_{ij}|g_\ell(x_i)]$ has full rank, 
the extension $P[X_{ij}|g_\ell(x_i)]=M'$ has full row rank.
\end{proof}

The theorem implies the following: 
If the  mapping $x\mapsto A_x$ is suitable for $S$,
if $|\mathbb{F}|\ge2n$,
and if we have hash functions $g_1,\ldots,g_k \colon U\to \mathbb{F}$
that are random on $S$, then with probability at least $\frac12$ 
we can build a retrieval structure
for a function $f\colon S\to \mathbb{F}$ 
consisting of a table $T[0..m-1]$ with entries from $\mathbb{F}$
with $f(x)=\sum_{1\le \ell\le k}g_\ell(x)\cdot T[h_\ell(x)]$.
If we can switch to new functions $g_1,\ldots,g_k$ if necessary,
this construction succeeds in $O(\log n)$ iterations whp.

For example, from the dictionary constructions in~\cite{DieWei07},
or~\cite{CRS03}, resp., we obtain 
 retrieval structures with a table of 
size $\le(1+e^{-k})n$ and lookup time $O(k)$,
or optimal size $n$ and lookup time $O(\log n)$, resp.
In both cases for one retrieval operation we need to access only two 
contiguous segments of the table~$T$,
which makes these implementations very cache-friendly. 
Another, even more cache friendly, possibility is to do linear probing in a table of size $(1+\Omega(1))n$, in which case the lookup time can be bounded by $O(\log n)$ with high probability.
\footnote{We were not able to find this fact in the literature, but from Chernoff bounds it is easy to see that with high probability any interval of length $i\geq C\log n$, with a suitably large constant $C$ depending on $\alpha$, contains at most $i$ hash values. This implies that the maximum probe length is bounded by $i$.}.


\section{Application to perfect hashing}\label{sec:perfect:hashing}

The problem of perfect hashing is the following: Given $S\subseteq U$, 
construct a data structure that makes 
it possible to calculate $h(x)\in[m]$ for $x\in U$ (fast),
so that $h$ is 1-to-1 on $S$.
We show that the constructions of Chazelle \emph{et al.}~\cite{CKRT04} and  
Botelho \emph{et al.}~\cite{BPZ07}, resp., 
which lead to space-efficient representations that can be evaluated fast,
can be enhanced so as to achieve a smaller space bound. 

As before, we assume we have $k$ hash functions
with ranges $[m],\ldots,[m-k+1]$ that are fully random on $S$.
As described in Appendix~\ref{sec:creating:sets} this yields a mechanism 
to calculate an ordered set $A_x=\{h_1(x),\ldots, h_k(x)\}$ of distinct values in $[m]$,
for $x\in U$, so that $A_x$, $x\in S$, is fully random. 
This set system can be regarded as a hypergraph. 
Matrix $M=(p_{ij})_{1\le i \le n, 0\le j <m}$ 
from (\ref{eq:500}) is the vertex-edge incidence matrix of this hypergraph. 

Chazelle \emph{et al.}~\cite{CKRT04} (implicitly) and  Botelho \emph{et al.}~\cite{BPZ07}
have proposed a method for 
obtaining a perfect hash function in this situation. 
If $M$ can be brought into echelon form
by permuting rows and columns, 
then each row has 
a distinguished column,
namely the column 
in which the first 1 from the left in this row is positioned.
The column index is an element of $A_x$.
Thus, we obtain a one-to-one-mapping $\varphi$ from
$\{1,\ldots,n\}$ into  $\{0,\ldots,m-1\}$
so that $\varphi(i)\in A_{x_i}$.
This means that for each $i$ 
(or $x_i$) there is a unique $\lambda(i)\in \{0,\ldots,k-1\}$
such that $\varphi(i) = h_{\lambda(i)}(x)$.
If we can set up a data structure that makes
it possible to calculate $\lambda(i)$ from $x_i$,
we can calculate a one-to-one mapping from $S$ into 
$\{0,\ldots,m-1\}$  ---  a perfect hash function.
But providing such a data structure is in essence the retrieval problem!
We know how to solve this with a data structure with 
$\lceil\log k\rceil\cdot m$ bits.
In the special case of a matrix
that can be brought into echelon form by permuting 
rows and columns 
the system of equations can even be solved in linear time. 

Chazelle \emph{et al.}~\cite{CKRT04} and  Botelho \emph{et al.}~\cite{BPZ07}
gave different arguments why for $m$ sufficiently large matrix $M$
should (with high probability) 
admit a transformation into upper triangular form by row and column permutations. 
 Chazelle et al.~\cite{CKRT04} gave ad hoc calculations
 leading to the bound $m>kn$. 
Botelho et al.~\cite{BPZ07} used results from random (hyper)graph theory
to state much smaller bounds (no closed formula): 
$m>1.222n$ for $k=3$ and $m>1.295n$ for $k=4$, and so on, 
with the constants growing for growing $k$.
(The question asked in~\cite{BPZ07} is whether the 
hypergraph given by $A_x$, $x\in S$, is ``acyclic''.)
We avoid the use of random graph theory and resort to
Calkin's theorem (Theorem~\ref{thm:Calkin}) to show that
the bounds $\beta_k^{-1}$ from Table~\ref{tab:ThresholdValues}
are relevant for this situation as well. 
The disadvantage of our approach is
that the algorithms that construct the data
structure need more time, since they involve Gaussian elimination.
Again, the splitting trick from Appendix~\ref{sec:split:and:share}
can alleviate this problem.
 
Assume the matrix $M$ has
full row rank. We first calculate a pseudoinverse $C$ that satisfies eq. (\ref{eq:C:times:M}) in Section~\ref{sec:retrieval:with:full:randomness}.
Since columns $b_1,\ldots,b_n$ of $C\cdot M$ 
form a regular quadratic matrix,
and $C\cdot M$ is obtained from $M$ only by row transformations, 
columns  $b_1,\ldots,b_n$ of $M$ also 
form a regular matrix.
This means that the determinant of the
submatrix of $M$ formed by these columns
is nonzero --- hence, by the definition
of the determinant as a sum of products over all permutations, 
there must be a bijection
$\varphi\colon \{1,\ldots,n\}\leftrightarrow\{b_1,\ldots,b_n\}$
with $p_{i,\varphi(i)}\not=0$, hence $p_{i,\varphi(i)}=1$, for $1\le i \le n$.
This means that $\varphi(i)\in A_{x_i}$, for $1\le i \le n$.
The mapping $\varphi$ may be found by an efficient
algorithm to calculate perfect matchings in bipartite graphs.
For each $i$, from $\varphi(i)$ we obtain a value
$\lambda(i) \in \{1,\ldots,k\}$ such that
$\varphi(i)=h_{\lambda(i)}(x_i)$.

We form a vector $(u_1,\ldots,u_n)$ by 
defining $u_i$ to be (the binary representation of)
$\lambda(i)-1$, using $r=\lceil \log k\rceil$ bits.
Applying the construction from Section~\ref{sec:retrieval:with:full:randomness}
we find a vector 
$a=(a_0,\ldots,a_{m-1})$ with elements in $\{0,1\}^r$
such that $h_a(x_i)=\lambda(i)-1$, for $1\le i \le n$.

Then the function
$$
h\colon U \to \{0,\ldots,m-1\}\, , \;\; x\mapsto   h_{h_a(x)+1}(x)
$$
is a perfect hash function for $S$ with range $\{0,\ldots,m-1\}$.
Evaluating the function amounts to calculating $h_a(x)$
as given by (\ref{eq:405}).
The function $h$ is represented by the table that contains the components of $a$.
This takes $m=(1+\delta)n$ words of $\lceil\log k\rceil$ bits,
where $\delta > \beta_k^{-1}-1$ is arbitrary.

Since $\beta_k^{-1} \sim 1 + e^{-k}/\ln 2$ for $k\to\infty$,
the relative space overhead $\delta$ may be made as small as we wish,
at the cost of larger $k$. 
A particularly attractive choice is $k=4$.
Since $\beta_4^{-1} < 1.035$, we could choose $m=1.035n$
and spend $2m$ bits for the representation 
of the vector $a$, which amounts to 
space requirements of $2.07n$ bits.

Bothelho \emph{et al.}~\cite{BPZ07}
describe how a perfect hash function may be turned 
into a minimal perfect hash function. 
There are several plausible techniques for this, 
one of them as follows:
One stores the set of locations in $\{0,\ldots,m-1\}-h(S)$ in a succinct rank
data structure~\cite{rankdict}. This table requires additional
space of 
$0.035 n \cdot \log_2(1.035/0.35) + n \cdot \log_2 (1.035/1) \approx 0.22 n + o(n)$ bits.
The total space needed for the minimal perfect hash function is 
$2.29n + o(n)$ bits,
which is a little better than the $2.7n$ from \cite{BPZ07}.
The price to pay for this improvement is that
to find the vector $a$ we must solve a system of linear equations
and solve an instance of the perfect matching problem,
while in \cite{BPZ07} a very simple linear-time algorithm is sufficient.
There is no big difference in the time needed to evaluate the
(minimal) perfect hash function.


\section{Open problems}

Our construction relies on either the assumption that the hash functions are fully random, or on the split-and-share construction. An obvious open problem (shared with most data structures that use multiple hash functions) is whether simpler hash functions can do the same job.

Another question is to which extent the correspondence shown in Theorem~\ref{thm:hashing:implies:retrieval} of Section~\ref{sec:hashing:retrieval} also holds for small values of $r$. For example, is the space threshold for $k$-ary cuckoo hashing identical to Calkin's threshold for random matrices with $k$ $1$s in each row?  Similarly, if we imitate blocked cuckoo hashing~\cite{DieWei07} by restricting the sets $A_{x_i}$ to be a subset of the union of two intervals of size $k$ (depending on $x_i$), what is the best space usage we can get?

\medskip

\textbf{Acknowledgement.}
The authors thank Philipp Woelfel for several motivating discussions
on the subject.



\appendix

\section{Creating random sets of size $k$ without repetitions}
                                                                           \label{sec:creating:sets}

We brief\/ly justify the assumption
that given $k$ fully random hash functions
\emph{with ranges we can choose} there is a way to 
map each key $x$ to a fully random sequence (or ordered set) $A_x=(h_1(x),\ldots,h_k(x))$ 
with all different values in $[m]$.
Just take $k$ fully random hash functions $g_1,\ldots,g_k$
where $g_\ell$ has range $[m-\ell+1]$, for $\ell=1,\ldots,k$. 
The existence of the sequence $A_x$ 
is then easily proved:
For $\ell=1,\ldots,k$, 
let $h_\ell(x)$ be element number $g_\ell(x)$ in the set
$[m]-\{h_1(x),\ldots,h_{\ell-1}(x)\}$.
Algorithmically, it is simpler to work with an array
$\texttt{B}[0..m-1]$, initialized with  $\texttt{B}[j]=j$.
Then, sequentially for $\ell=1,\ldots,k$,
the value at position $\texttt{B}[g_{\ell}(x)]$
is exchanged with the value at $\texttt{B}[m-\ell]$.
The output sequence is $(h_1(x),\ldots,h_k(x))=(\texttt{B}[m-1],\ldots,\texttt{B}[m-k])$.
Clearly, this is a random sequence of $k$
distinct elements of $[m]$.
If space is an issue, one might not actually use this array $\texttt{B}$,
but just simulate the effect.
Time $O(k\log k)$ (using a search tree)
or expected time $O(k)$ (using hashing) is definitely sufficient.
If the ``split-and-share'' approach of Appendix~\ref{sec:split:and:share} is employed,
the space used for array $\texttt{B}$ is $O(n^\varepsilon)$.

\section{How to circumvent the full randomness assumption:\\ ``split-and-share''} 
                                                                         \label{sec:split:and:share}

While it is quite common to assume that fully random hash functions
are available for free in the context of Bloom filters and similar
data structures, in the realm of dictionary implementations
and construction of perfect hash functions
one prefers to use randomization in the algorithm, viz., universal hashing. 
We brief\/ly sketch how in the context of the static
data structures of this paper
universal hashing may be used to justify the full randomness assumptions. 
(For details, see~\cite{Die07}.)

We assume the reader is familiar with the concept of a universal class of hash functions from 
$U$ to $M$ (for distinct keys $x,y\in U$ and $h$ chosen at random
from the class we have $\Pr(h(x)=h(y)\le 1/|M|$) as well as the concept
of $k$-wise independent hash classes (for arbitrary distinct keys 
$x_1,\ldots,x_k\in U$ and $h$ chosen at random from the class
the hash values $h(x_1),\ldots,h(x_k)$ are fully random in $M$).
Constructions of such classes, with arbitrary ranges $M=[m]$, are well known,
see~\cite{CW:79,WC:79}. 

Let $S\subseteq U$, $|S|=n$, be fixed. It is a common idea to 
use a hash function $h\colon U\to [m]$ to split
$S$ into ``chunks'' $S_i=\{x\in S\mid h(x)=i\}$, for $i\in [m]$,
and then work on the ``chunks'' separately. 
In our context, we would {e.\,g.} construct a retrieval structure
for each $S_i$ separately in a dedicated segment of memory.
A more recent idea~\cite{Die07,DieWei07}
is to use a ``shared'' table of random words to
simulate hash functions that are fully random on each single chunk.
The total space usage of this table may be kept at $o(n)$.

If $m\ge 2n^{2/3}$ and $h\colon U\to[m]$
is chosen from a 4-universal class, 
then (as a standard calculation shows) 
the largest chunk $S_i$ will have at most $\sqrt{n}$ elements
with probability larger than $\frac34$.
We repeat choosing such $h$'s until one is found
that satisfies $\max\{|S_i| \mid 0\le i < m\}\le \sqrt{n}$.
This we fix and call it $h^0$.
The size of $S_i$ is called $n_i$.

It is a folklore fact that using $n_i^{1+\epsilon}$ space it is
very easy to provide a data structure that gives
fully random hash functions on $S_i$,
which can be evaluated in constant time.
A concrete construction might look as follows.

\begin{lemma}[{e.\,g.} \cite{Die07}]\label{lem:sim:randomness}
Let $r=2n^{3/4}$, and let $S'\subseteq U$ with $n'=|S'|\le\sqrt{n}$.
Given a $1$-universal class of hash functions from $U$ to $[r]$, 
we may in expected time $O(|S'|)$ find two functions $h_0,h_1$ from that class
such that if the two tables $\texttt{T}_0[0..r-1]$ and $\texttt{T}_1[0..r-1]$
are filled with random numbers from $[t]$
then $h'(x)=(\texttt{T}_0[h_0(x)]+\texttt{T}_1[h_1(x)])\bmod t$
defines a function that is fully random on $S'$.
\end{lemma}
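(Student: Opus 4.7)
The plan has two essentially independent ingredients. First I would establish a structural claim: if the bipartite multigraph $G$ on vertex set $[r]\sqcup[r]$ with edges $e_x=\{h_0(x),h_1(x)\}$, $x\in S'$, is acyclic (a forest), then $(h'(x))_{x\in S'}$ is a family of mutually independent, uniform random variables on $[t]$, where the randomness is over the table contents $T_0,T_1$. Second I would show that two independent random draws $h_0,h_1$ from the universal class produce such a forest with at least constant probability, so that rejection sampling succeeds in expected $O(1)$ iterations.

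For the structural claim I would use a peeling/induction argument on $|S'|$. Any forest with at least one edge has a vertex $v$ of degree one; say $v$ lies in the first copy of $[r]$ and is incident only to $e_{x^*}$. Then $T_0[v]$ appears in exactly one summand defining $h'$, namely in $h'(x^*)=T_0[v]+T_1[h_1(x^*)]$. Conditional on every other cell of $T_0,T_1$, the value $T_0[v]$ remains uniform on $[t]$, so $h'(x^*)$ is uniform on $[t]$ and independent of $(h'(x))_{x\neq x^*}$. Removing $e_{x^*}$ leaves a forest with one fewer edge, and the induction hypothesis finishes the proof.

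For the probabilistic claim I would use a union bound over potential cycles in $G$. Since $n'\le\sqrt n$ and $r=2n^{3/4}$, we have $n'/r=O(n^{-1/4})$, so $G$ is extremely sparse. Using the independence of $h_0$ from $h_1$ together with the $1$-universal collision bound $1/r$, I would argue that the expected number of cycles of length $2\ell$ is bounded by (a constant times) $(n'^2/r^2)^\ell=O(n^{-\ell/2})$; summing over $\ell\ge 2$ gives $o(1)$, so $\Pr[G\text{ acyclic}]\ge 1-o(1)\ge\tfrac12$ for large $n$. The algorithm then samples $h_0,h_1$, computes the $n'$ edges $e_x$, and checks acyclicity with union-find in time $O(n'\alpha(n'))=O(n')$; by the constant success probability, the expected total time is $O(|S'|)$.

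The main obstacle is the last step: turning pairwise $1$-universality into a useful bound on the joint probability of the $2\ell$ collision events needed for a cycle of length $2\ell$. A cycle imposes $\ell$ disjoint collision constraints on $h_0$ and $\ell$ on $h_1$, and while independence of $h_0,h_1$ splits the problem in two, bounding $\ell$ simultaneous collisions in a single $1$-universal function by $r^{-\ell}$ requires more structure than pairwise universality alone. The usual way out is either (i) to restrict to short cycles, which already gives acyclicity with probability $1-o(1)$ because the dominant contribution comes from $\ell=2$ (pairs of parallel edges, bounded by $\binom{n'}{2}/r^2=O(n^{-1/2})$ via genuine independence of $h_0$ and $h_1$), and to handle longer cycles by an inductive collision-chain argument; or (ii) to strengthen the assumption to $c$-wise independence for a modest $c$, which holds for all standard universal constructions cited in \cite{CW:79,WC:79}. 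Either resolution gives the $\Pr\ge 1/2$ bound and completes the proof.
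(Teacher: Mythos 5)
Your structural claim is correct and is the right way to see things: if the bipartite multigraph with edges $\{h_0(x),h_1(x)\}$ is a forest, then peeling degree-one vertices shows the $h'(x)$, $x\in S'$, are jointly uniform over $[t]$, whatever $t\ge2$ is. That is the content one needs. Note that the paper itself does not prove the lemma; it cites \cite{Die07} and calls the fact folklore, so you are reconstructing, not paralleling.

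The difficulty you flag in the probabilistic half is real \emph{for the route you chose}: a union bound over all cycle lengths needs, for a cycle of length $2\ell$, simultaneous control of $\ell$ disjoint collision events in $h_0$ (and likewise in $h_1$), and $1$-universality gives only one collision at a time. Your fix (i) does not quite close this: independence of $h_0$ and $h_1$ lets you pay $1/r^2$ for one $h_0$-collision and one $h_1$-collision, but $n'^{2\ell}/r^2$ blows up already for $\ell=2$, so longer cycles are not controlled, and the phrase ``inductive collision-chain argument'' is not a proof. Your fix (ii), upgrading to $\Theta(\log n)$-wise independence, does work, but it changes the hypothesis of the lemma.

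What you are missing is that there is a cheaper way to certify acyclicity that needs nothing beyond $1$-universality of each of $h_0,h_1$ plus their mutual independence. Observe that every edge lying on a cycle of the bipartite multigraph must share its $A$-endpoint with the next cycle edge and its $B$-endpoint with the previous one; hence every such key $x$ has \emph{both} an $h_0$-collision and an $h_1$-collision with some other key of $S'$. Writing $E_\sigma=\{x\in S' : \exists y\in S'\setminus\{x\},\ h_\sigma(x)=h_\sigma(y)\}$, a cycle forces $E_0\cap E_1\neq\emptyset$. Now
\[
  \E\bigl[|E_0\cap E_1|\bigr]
  =\sum_{x\in S'}\Pr[x\in E_0]\,\Pr[x\in E_1]
  \le n'\cdot\Bigl(\tfrac{n'-1}{r}\Bigr)^{2}
  \le \frac{n'^3}{r^2}
  \le \frac{n^{3/2}}{4n^{3/2}}=\tfrac14,
\]
using the union bound and $1$-universality for each factor $\Pr[x\in E_\sigma]\le (n'-1)/r$, and independence of $h_0,h_1$ to multiply. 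By Markov, $E_0\cap E_1=\emptyset$ (hence the graph is a forest) with probability at least $3/4$. Checking this (or acyclicity directly via union--find) takes $O(n')$ time per trial, giving expected time $O(|S'|)$. This is presumably the argument intended behind the citation, and it salvages the $1$-universal hypothesis without any appeal to higher independence.
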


This simple observation may be used as follows. 
For each of the $m$ chunks $S_i$ we  find and fix and store two functions
$h_{i,0},h_{i,1}$ (constant description size) 
as in Lemma~\ref{lem:sim:randomness}.
Further, we provide a suitable number $L$ of pairs of tables 
$\texttt{T}_{j,0}$ and $\texttt{T}_{j,1}$, $1\le j \le L$,
independently filled with random numbers from $[t]$
(we may even use varying ranges $[t_j]$).
Then 
$$
h'_{i,j}(x)=(\texttt{T}_{j,0}[h_{i,0}(x)] + \texttt{T}_{j,1}[h_{i,1}(x)])\bmod t \; 
                                           \mbox{, for \ } 1\le j \le L,
$$
provides $L$ fully random hash functions on $S_i$.
The total space taken up by the tables is no more than $2n^{3/4}\cdot L$ numbers,
so any $L=O(n^{1/4-\eta}/\log(t))$ will lead to a space usage of $O(n^{1-\eta})$ bits. 

The central observation is that this setup makes it possible
to work with $L$ truly fully random hash functions on $S_i$
to construct a data structure $D_i$ for solving the 
respective problem (retrieval, approximate membership) for the keys in $S_i$.
As long as we keep the data structures disjoint, 
the fact that the tables are shared is harmless.
More details may be found in~\cite{Die07}. 

\begin{remark}\label{remark:approximate:with:split:and:share:two}
\begin{rm}
In Remark~\ref{remark:approximate:with:split:and:share:one}
we mentioned a subtle problem in the false positive probability in
the approximate membership problem in case we use the split-and-share 
approach to simulate the hash functions. 
The problem is caused by the fact that a query point $y\in U-S$
is not in $S_i$ for $i=h(y)$ and hence there is no guarantee that
$q$ (simulated by $h_{i,0},h_{i,1}$ and 
some pair $\texttt{T}_{j,0}$ and $\texttt{T}_{j,1}$)
is fully random on $S_i\cup\{y\}$.
By the proof of Lemma~\ref{lem:sim:randomness} as given in \cite{Die07} 
it can be seen that
the probability that this happens for any fixed $y$ is $O(1/\sqrt{n})$.
This term has to be added to the false positive probability. 
\end{rm}
\end{remark}

\section{Space lower bound for approximate membership}
                                                                            \label{app:approx-lower}

For completeness we prove a lower bound for the space needed for an
approximate membership data structure. The proof is a slight extension
of the lower bound of Carter et al.~\cite{MR80h:68037}.

\begin{theorem}[Carter {\em et al}.~\cite{MR80h:68037}]
Let $u=|U|$ and consider an approximate membership data structure for
sets of size $n\geq 1$ and false positive probability~$\varepsilon$,
$0<\varepsilon<1$. Then the space usage in bits must be at least
$n\log_2 (1/\varepsilon)-O\left(\frac{(1-\varepsilon)n^2}{\varepsilon
u+ (1-\varepsilon) n}\right)$. Specifically, for $u>n^2/\varepsilon$
the space usage must be at least $n\log_2 (1/\varepsilon)-O(1)$ bits.
\end{theorem}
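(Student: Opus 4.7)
The plan is to run a counting / encoding argument of exactly the same shape as the classical Carter \emph{et al.}\ bound. Suppose we have an approximate membership data structure using $s$ bits. For each possible content $\beta\in\{0,1\}^s$, define the ``acceptance set'' $T_\beta=\{x\in U : \text{the DS with content $\beta$ answers ``yes'' on $x$}\}$. If $\beta$ is a valid encoding of some $n$-element set $S$, then we must have $S\subseteq T_\beta$ (no false negatives). To extract a deterministic bound from the randomized false-positive guarantee, I would average: for each fixed $S$, summing the per-element false-positive probability over all $x\in U\setminus S$ yields $\E_r[|T_{\beta_r(S)}\setminus S|]\le \varepsilon(u-n)$, where $\beta_r(S)$ denotes the content produced using randomness $r$. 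Hence there exists some realization $r_S$ with $|T_{\beta_{r_S}(S)}\setminus S|\le \varepsilon(u-n)$, and we can associate to $S$ the content $\beta_S:=\beta_{r_S}(S)$ which satisfies $S\subseteq T_{\beta_S}$ and $|T_{\beta_S}|\le n+\varepsilon(u-n)$.

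Next I would perform the standard injection-style count. Since any $\beta$ can serve as $\beta_S$ only for $n$-subsets of $T_\beta$, it represents at most $\binom{|T_\beta|}{n}\le \binom{n+\varepsilon(u-n)}{n}$ distinct sets $S$. Ranging over all $n$-subsets of $U$ gives
\[
\binom{u}{n}\;\le\; 2^{s}\cdot\binom{n+\varepsilon(u-n)}{n},
\qquad\text{so}\qquad
s\ge \log_{2}\binom{u}{n}-\log_{2}\binom{n+\varepsilon(u-n)}{n}.
\]

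The analytic part is to estimate the ratio of the two binomials. Writing $T:=n+\varepsilon(u-n)$, I would expand
\[
\log_2\frac{\binom{u}{n}}{\binom{T}{n}}=\sum_{i=0}^{n-1}\log_2\frac{u-i}{T-i}
=\sum_{i=0}^{n-1}\log_2\!\Bigl(\tfrac{1}{\varepsilon}\Bigr)+\sum_{i=0}^{n-1}\log_2\frac{A+\varepsilon B_i}{A+B_i},
\]
where $A=\varepsilon(u-n)$ and $B_i=n-i$. The first sum is $n\log_2(1/\varepsilon)$, so the error term reduces to bounding $-\sum_i \log_2\!\bigl(1-(1-\varepsilon)B_i/(A+B_i)\bigr)$. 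Using $-\ln(1-x)=O(x)$ when $x$ is bounded away from $1$, and the elementary inequality $\sum_{j=1}^{n}j/(A+j)\le n\cdot n/(A+n)=n^{2}/(A+n)$, the total loss is
\[
O\!\left(\frac{(1-\varepsilon)\,n^{2}}{A+n}\right)=O\!\left(\frac{(1-\varepsilon)\,n^{2}}{\varepsilon u+(1-\varepsilon)n}\right),
\]
yielding the stated bound. The ``specifically'' clause follows immediately: when $u>n^{2}/\varepsilon$, the denominator dominates $n^{2}$ and the correction collapses to $O(1)$.

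The main conceptual obstacle is the passage from a randomized DS to a usable deterministic encoding; this is handled by the averaging trick above, which is lossless because the guarantee $\Pr[x\in T]\le\varepsilon$ for each individual $x\notin S$ implies a bound on $\E[|T\setminus S|]$, and hence the existence of at least one ``good'' realization per $S$. The only calculational subtlety is to get the denominator in the form $\varepsilon u+(1-\varepsilon)n$ rather than the looser $\varepsilon(u-n)$; this requires the monotonicity-based bound $\sum j/(A+j)\le n^{2}/(A+n)$ rather than the crude $\sum j/A\le n^{2}/(2A)$.
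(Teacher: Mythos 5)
Your proposal is correct and takes essentially the same route as the paper: the same covering/counting argument (for each $S$, use the false-positive guarantee and Markov to find a memory content whose acceptance set has size at most $n+\varepsilon(u-n)$, then bound the number of $n$-sets any one content can cover by $\binom{n+\varepsilon(u-n)}{n}$, and count). The only difference is cosmetic, in how the ratio $\binom{u}{n}/\binom{T}{n}$ is estimated — the paper first crudely lower-bounds the product by $(u/T)^n$ and then applies an exponential inequality, while you estimate each factor $\frac{u-i}{T-i}$ termwise — and both paths arrive at the same $O\bigl(\tfrac{(1-\varepsilon)n^2}{\varepsilon u+(1-\varepsilon)n}\bigr)$ correction, sharing the same (harmless) implicit assumption that the constant in $-\ln(1-x)=O(x)$ requires $\varepsilon u/n$ bounded below, which is precisely the regime where the bound is nontrivial.
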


\emph{Proof}:
Any instance $\mathcal{I}$ of the data structure corresponds to a
subset $U_{\mathcal{I}} \subseteq U$, namely the set of elements $x$
for which the data structures answers ``$x\in S$''. For any set
$S\subseteq U$ there must be some instance $\mathcal{I}$ for which
$S\subseteq U_{\mathcal{I}}$. Furthermore, there must exist such an
instance where $|U_{\mathcal{I}}|\leq \varepsilon (u-n) + n$. This is
because the expected number of false negatives in $U - S$ is
at most $\varepsilon (u-n)$ (when choosing the data structure for
$S$). We say that the instance {\em covers\/} $S$ if these two
conditions hold. The number of sets that can be covered by an instance
$\mathcal{I}$ is $\binom{|U_{\mathcal{I}}|}{n}\leq \binom{\lfloor
\varepsilon (u-n)\rfloor + n}{n}$. This means that the number of
instances needed to cover all subsets of $U$ of size $n$ is
bounded from below by:
\begin{alignat*}{2}
\frac{\binom{u}{n}}{\binom{\lfloor\varepsilon (u-n)\rfloor+n}{n}} &
\geq \frac{u (u-1) \cdots (u-n+1)}{(\varepsilon (u-n) + n) (\varepsilon
(u-n) + n-1) \cdots (\varepsilon (u-n) + 1)}\\
 & > \left(\frac{u}{\varepsilon (u-n)+n}\right)^n\\
 & = \left(\frac{1}{\varepsilon}
\left(1-\frac{(1-\varepsilon)n}{\varepsilon u + (1-\varepsilon)
n}\right)\right)^n\\
 & \geq \left(\frac{1}{\varepsilon} \right)^n
\exp\left(-\frac{(1-\varepsilon)n^2}{\varepsilon u+ (1-\varepsilon)
n}\right) \enspace .
\end{alignat*}
Since each instance has a unique memory representation, this means
that the number of bits used by the data structure in the worst case
must be at least $\log_2$ of the number of instances.
$\Box$ 


\end{document}